\documentclass[12pt, draftclsnofoot, onecolumn]{IEEEtran}
\usepackage{bm,cite}
\usepackage{amsmath}
\usepackage{amsthm}
\usepackage{amsbsy}
\usepackage{epsfig}
\usepackage{latexsym}
\usepackage{psfrag}
\usepackage{amssymb} 
\usepackage{placeins}

\newtheorem{theorem}{Theorem}

\newtheorem{lemma}{Lemma}

\DeclareMathAlphabet{\mathbit}{OML}{cmr}{bx}{it}
\DeclareMathAlphabet{\mathsf}{OT1}{cmss}{m}{n}
\DeclareMathAlphabet{\mathTXf}{OT1}{cmss}{bx}{it}

\DeclareMathOperator{\diag}{diag}

\DeclareMathOperator{\ZF}{ZF}
\DeclareMathOperator{\DoF}{DoF}

\DeclareMathOperator{\Id}{\mathbf{I}}
\DeclareMathOperator{\CN}{\mathcal{N}_{\mathbb{C}}}
\DeclareMathOperator{\KMAT}{K_{\alpha}-\mathrm{MAT}}

\DeclareMathOperator{\Out}{\mathrm{Out}}
\DeclareMathOperator{\MAT}{\mathrm{MAT}}
 
\DeclareMathOperator{\ALTMAT}{\mathrm{A-MAT}}
\DeclareMathOperator{\AlphaMAT}{\alpha\mathrm{-MAT}}


\newcommand{\I}{\mathbf{I}} 
\newcommand{\norm}[1]{\lVert{#1}\rVert}

\newcommand{\Fro}{{\text{F}}}
\newcommand{\tr}{{\text{tr}}}

\newcommand{\E}{{\mathrm{E}}}

\newcommand{\He}{{{\mathrm{H}}}}

\newcommand{\xv}{\mathbf{x}}

%


\begin{document} 
\title{On the Degrees of Freedom of the $K$-User Time Correlated Broadcast Channel with Delayed CSIT}
\author{Paul de Kerret, Xinping Yi and David Gesbert\\Mobile Communications Department, Eurecom\\
Campus SophiaTech, 450 Route des Chappes, 06410 Biot, France\\\{dekerret,yix,gesbert\}@eurecom.fr}

\maketitle
\begin{abstract} 
The Degrees of Freedom (DoF) of a $K$-User MISO Broadcast Channel (BC) is studied when the Transmitter (TX) has access to a delayed channel estimate in addition to an imperfect estimate of the current channel. The current estimate could be for example obtained from prediction applied on past estimates, in the case where feedback delay is within the coherence time. Building on previous recent works on this setting with two users, the estimation error of the current channel is characterized by its scaling as~$P^{-\alpha}$ where~$\alpha=1$ (resp. $\alpha=0$) corresponds to an estimate being essentially perfect (resp. useless) in terms of DoF. In this work, we contribute to the characterization of the DoF region in such a setting by deriving an outerbound for the DoF region and by providing an achievable DoF region. The achievable DoF is obtained by developing a new alignment scheme, called the $\KMAT$ scheme, which builds upon both the principle of the MAT alignment scheme from Maddah-Ali and Tse and Zero-Forcing to achieve a larger DoF when the delayed CSIT received is correlated with the instantaneous channel state.  
\end{abstract} 
\IEEEpeerreviewmaketitle 
\section{Introduction}\label{se:SM}
The use of multiple-antenna has been recognized during the last decade as a key element to improve performance in wireless networks due to the possibility to achieve a larger number of Degrees-of-Freedom (DoF), or pre-log factor, by transmitting several independent data streams at the same time\cite{Telatar1999}. While in point-to-point MIMO systems, the maximal DoF can be achieved without Channel State Information (CSI) at the Transmitter (TX), the exploitation of the multiple-antennas at the TX to achieve a DoF larger than one in multiuser settings heavily relies on the availability of accurate-enough CSI at the TX (CSIT). For instance, it is well known that in the $K$-user Multiple-Input Single Output (MISO) Broadcast Channel (BC), the DoF is reduced from~$K$ to~$1$ in the absence of CSIT\cite{Jafar2005} while full DoF is preserved if the variance of the channel estimation error falls as $P^{-1}$ or faster, where~$P$ is the Signal-to-Noise Ratio (SNR)\cite{Jindal2006,Caire2010}. Similar conclusions have been obtained in more general settings \cite{Huang2012,Vaze2012b}.

Yet, the obtaining of an accurate-enough CSIT represents a challenge in many settings. Indeed, the channel estimate has to be fed back from the RXs which inevitably introduces some delays and some degradations. Therefore, a large literature has focused on the problem of designing efficient feedback schemes and evaluating the impact of imperfect CSIT [See \cite{Love2008,Jindal2006} and reference therein].

Recently, a new line of work was opened by the work from Maddah-Ali and Tse \cite{MaddahAli2010,MaddahAli2012}. Studying a $K$-user MISO BC, they showed that even completely outdated CSIT, in the sense that the feedback delay exceeds the coherence period of the channel, could still be used to achieve a larger DoF than in the absence of CSIT. This is accomplished through a space-time alignment of the interference referred in the literature as the \emph{MAT} alignment. Furthermore, if the channel matrices are independent and identically distributed over time and across the Receivers (RXs), the $\MAT$ scheme is then optimal in terms of DoF.

This new method of exploiting stale CSIT has attracted a large interest and has been extended to further network scenarios. In \cite{Vaze2011a,Abdoli2011a}, the approach is adapted to two-user and three-user settings with multiple-antenna at the RXs, and to Interference Channels (ICs) and X-channels in \cite{Vaze2012a,Maleki2012,Abdoli2011b,Tandon2012a}, among others. In \cite{Mohanty2012}, the IC with TXs having unequal CSIT is also investigated.

Going beyond completely outdated CSIT, settings with CSIT of alternating qualities have been investigated. In \cite{Lee2012}, a setting is studied in a block fading model where the CSIT is only accurate for some time slots and completely outdated during others. It is then shown that under some conditions the maximal DoF can still be achieved. Considering a more general CSIT model, the two-user MISO BC is studied in \cite{Tandon2012b} in the case where the CSIT relative to one user is alternatively perfect, completely outdated, or non-existent. It is then shown that the alternating between different CSIT configurations can lead to synergistic benefits.

Yet, a major restriction of these works is that they all consider the delayed CSIT as being completely uncorrelated with the instantaneous channel state. This assumption is lifted in \cite{Kobayashi2012} where an improved DoF is shown to be achievable in the case where the delayed CSIT is assumed to be possibly correlated with the current channel state. As a consequence, an imperfect estimate of the current channel can be obtained by prediction based on the delayed CSIT. Specifically, it is assumed that the channel estimation error resulting from the prediction based on the delayed CSIT scales as $P^{-\alpha}$ with $\alpha\geq 0$ being the \emph{CSIT quality exponent}. Thus, when~$\alpha$ is equal to one, the imperfect estimate of the current channel is essentially perfect in terms of DoF. On the opposite when~$\alpha$ tends to zero, the estimate of the current channel is essentially useless. 

Building on the approach developed in \cite{Kobayashi2012}, the scheme was improved to reach the maximal DoF in a two-user MISO scenario\cite{Yang2012,Gou2012}. The scheme achieving the optimal DoF region in the two-user MISO BC is referred hereafter as the $\AlphaMAT$ scheme. This approach has then been extended to imperfect delayed CSIT in \cite{Chen2012a,Chen2012b} and to two-user MIMO BC and IC in \cite{Yi2012}. The study of delayed CSIT correlated to the instantaneous channel state has always remained restricted to the two-user case and the results do not trivially extend to more users. Finding the DoF region and extending the $\AlphaMAT$ alignment to more users is precisely the goal of this work.

Specifically, our main contributions are as follows.
\begin{itemize}
\item As a preliminary step, we develop a new alignment scheme, called the $\ALTMAT$ scheme, to exploit completely outdated CSIT. This scheme can be seen as an extension of the alternative version of $\MAT$ for the two-user case and is more adapted to the combined use of ZF and alignment based on delayed CSIT. Yet, its performances are suboptimal.
\item We derive an outerbound for the $K$-user MISO broadcast channel with delayed CSIT and imperfect current CSIT with quality exponent~$\alpha$.
\item We develop a new scheme which combines the $\ALTMAT$ alignment scheme and Zero-Forcing (ZF) in such a way that the sum DoF takes the simple form $(1-\alpha)\DoF^{\ALTMAT}+\alpha \DoF^{\ZF}$, where $\DoF^{\ALTMAT}$ and $\DoF^{\ZF}$ are the sum DoF achieved respectively with the $\ALTMAT$ scheme and with ZF.
\end{itemize}

\emph{Notations:} The complex circularly invariant Gaussian distribution of mean~$\mu$ and variance~$\sigma^2$ is denoted by~$\CN(0,\sigma^2)$. $f(x)\sim g(x)$ denotes the fact that~$\lim_{x\rightarrow \infty}\frac{f(x)}{g(x)}=C$ with $C\neq 0$. The $j$th element of the $i$th row of the matrix~$\mathbf{A}$ is denoted by $\{\mathbf{A}\}_{ij}$. The function $\log$ represents the logarithm with base~$2$ and~$\norm{\mathbf{A}}_{\Fro}$ the Frobenius norm of the matrix~$\mathbf{A}$. $\mathbf{A}\succeq\mathbf{0}$ is used to represent the fact that the matrix $\mathbf{A}$ is positive semidefinite while $\mathbf{A}\succeq \mathbf{B}$ denotes the that $\mathbf{A}- \mathbf{B}\succeq \mathbf{0}$. If~$\mathbf{A}$ is a positive definite matrix, $\mathbf{A}^{1/2}$ denotes the unique lower triangular matrix with strictly positive coefficient obtained via the Cholesky factorization such that~$\mathbf{A}=\mathbf{A}^{1/2}(\mathbf{A}^{1/2})^{\He}$. We write wlog for \emph{without loss of generality} and i.i.d. for \emph{independently and identically distributed}.
\section{System Model}\label{se:SM}

\subsection{$K$-User MISO Broadcast Channel}

This work considers a $K$-User MISO BC where the TX is equipped with $M$~antennas and serves $K$~single-antenna users. We assume furthermore that $M\geq K$. At any time~$t$, the signal received at RX~$i$ can be written as
\begin{equation}
y_i(t)=\bm{h}_i^{\He}(t)\bm{x}(t)+z_i(t)
\label{eq:SM_1}
\end{equation}
where $\bm{h}_i^{\He}\in \mathbb{C}^{1\times M}$ is the channel to user~$i$ at time~$t$,~$\bm{x}\in \mathbb{C}^{M\times 1}$ is the transmitted signal, and $z_i(t)\in \mathbb{C}$ is the additive noise at RX~$i$, independent of the channel and the transmitted signal and distributed as~$\CN(0,1)$. Furthermore, the transmitted signal~$\bm{x}(t)$ fulfills the average power constraint~$\E[\|\bm{x}(t)\|^2]\leq P$. 

We define further the channel matrix~$\mathbf{H}\triangleq [\bm{h}_1,\ldots,\bm{h}_K]^{\He}\in \mathbb{C}^{K\times M}$ and introduce the notation~$\mathcal{H}^{t}\triangleq \{\mathbf{H}(k)\}_{k=1}^{k=t}$. 
The channel is assumed to be drawn from a continuous ergodic distribution such that all the channel matrices and all their submatrices are full rank.

\subsection{Delayed CSIT with Correlation in Time}
The considered CSIT model builds on the delayed CSIT model introduced in \cite{MaddahAli2010} and generalized to account for time correlation in \cite{Kobayashi2012}. According to this model, the TX has access at time~$t$ to the delayed CSI. It takes the form of the CSI up to time $t-1$ which is denoted by~$\mathcal{H}^{t-1}$. Furthermore, exploiting the correlation in time between the delayed CSI~$\mathcal{H}^{t-1}$ and the current channel state~$\mathbf{H}(t)$, the TX produces an imperfect estimate of the channel state denoted by~$\hat{\mathbf{H}}(t)$. This channel estimate is then modeled such that
\begin{equation}
\mathbf{H}(t)=\hat{\mathbf{H}}(t)+\tilde{\mathbf{H}}(t)
\label{eq:SM_2}
\end{equation}
where the channel estimate and the channel estimation error are independent, the channel estimation error~$\tilde{\mathbf{H}}(t)$ has its elements i.i.d. $\CN(0,\sigma^2)$ while the elements of the channel estimate~$\hat{\mathbf{H}}(t)$ are assumed to have a variance equal to~$1-\sigma^2$. We further define~$\hat{\mathcal{H}}^t\triangleq \{\hat{\mathbf{H}}(k)\}_{k=1}^{k=t}$ and $\tilde{\mathcal{H}}^t\triangleq \{\tilde{\mathbf{H}}(k)\}_{k=1}^{k=t}$.

It is also assumed that the channel state~$\mathbf{H}(t)$ is independent of the pair~$(\hat{\mathcal{H}}^{t-1},\tilde{\mathcal{H}}^{t-1})$ when conditioned on~$\hat{\mathbf{H}}(t)$.

The variance~$\sigma^2$ of the estimation error is parameterized as a function of the SNR~$P$ such that~$\sigma^2= P^{-\alpha}$ where we have defined the \emph{CSIT quality exponent}~$\alpha$ as
\begin{equation}
\alpha\triangleq\lim_{P\rightarrow \infty} \frac{-\log(\sigma^2)}{\log(P)}.
\label{eq:SM_3}
\end{equation}
Note that from a DoF perspective, we can restrict ourselves to $\alpha \in [0,1]$ since an estimation/quantization error scaling as~$P^{-1}$ is essentially perfect while an estimation error scaling as~$P^{0}$ is essentially useless in terms of DoF.

\emph{Remark:} This suggests that in order to keep the rate scaling in the SNR, and under a given time-correlation model, the feedback delay as a fraction of the correlation time must shrink as the SNR increases (e.g., the terminal velocity must decrease).

Note furthermore that for any ZF precoded vector~$\bm{u}$ such that $\hat{\bm{h}}_i^{\He}\bm{u}=0$, it can easily be shown that~$\E[|\bm{h}^{\He}_i\bm{u}|^2]\sim P^{-\alpha}\E[\|\bm{u}\|^2]$.

Following the conventional assumption from the literature of delayed CSIT (e.g., in \cite{MaddahAli2012}), all the RXs are assumed to receive with a certain delay both the perfect multiuser CSI and the imperfect CSI. This CSI is used only for the RX to decode its data symbols such that the only limitation for this delay lies in the delay requirement of the data transmitted. The CSI at the RX side could for example be obtained if each user broadcasts is CSI implying that the other RXs can obtain the same CSI as the TX. Another solution is to simply let the TX send its perfect delayed CSIT to all the RXs\cite{Xu2012}.

\subsection{Degrees-of-Freedom Analysis}
Albeit an incomplete measure of system performance, the DoF offers the unique advantage of allowing for analytical tractability for even complex network models and feedback scenarios such as this one. Let us denote by~$\mathcal{D}^{*}$ the DoF-region, which is defined as follows.
\begin{equation}
\mathcal{D}^{*}\triangleq\left\{(d_1,d_2,\ldots,d_K)|\exists (R_1(P),\ldots,R_K(P)\in \mathcal{C}(P)\text{ , s.t. }\forall i=1,\!\ldots\!,K,d_i\!=\!\lim_{P\rightarrow \infty}\frac{R_i(P)}{\log(P)}\right\}
\label{eq:SM_4}
\end{equation}
where $\mathcal{C}(P)$ is the capacity region. Furthermore, the maximal sum DoF will also be of particular interest in this work. We denote it by~$\DoF^{*}$ and define it such that
\begin{equation}
\DoF^{*}\triangleq \max_{(d_1,\ldots, d_K)\in \mathcal{D}^{*}}\sum_{i=1}^K d_i.
\label{eq:SM_5}
\end{equation} 
\section{Main Results}\label{se:main}
We provide in this section our main results.
\subsection{Outerbound}\label{se:main:out}

We start by describing an outerbound for the DoF region, which will then be proven in Section~\ref{se:Outerbound}.
\begin{theorem}
In the $K$-user MISO BC with perfect delayed CSIT and current CSIT with quality exponent~$\alpha$, the DoF region $\mathcal{D}^{*}$ is outerbounded by~$\mathcal{D}^{\Out}$ defined by
\begin{align}
\forall \pi \in \mathcal{S}_p, p\in \{2,\ldots,K\}, ~~~~ \sum_{k=1}^p \frac{d_{\pi(k)}}{k} &\leq 1+\alpha\sum_{k=2}^{p} \frac{1}{k}\\
\forall i\in \{1,\ldots,K\}, ~~~0\leq d_i&\leq 1.
\end{align}
where $\mathcal{S}_p$ is the symmetric group containing all the permutations of $\{1,\ldots,p\}$.
In turn, the sum DoF is upperbounded by~$\DoF^{\Out}$ defined as
\begin{align}
\DoF^{\Out}=\frac{K\left(1+\alpha\sum_{k=2}^{K} \frac{1}{k}\right)}{\sum_{k=1}^{K} \frac{1}{k}}.
\label{eq:Main_1}
\end{align}
\label{thm_out}
\end{theorem}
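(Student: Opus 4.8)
The plan is to establish the whole family of weighted-sum constraints by a genie-aided converse for a physically degraded version of the channel, and then to recover the sum-DoF bound~\eqref{eq:Main_1} by symmetrising over permutations. Fix $p\in\{2,\ldots,K\}$; by relabelling the users it suffices to prove the bound for the identity permutation, i.e. $\sum_{k=1}^p d_k/k \le 1+\alpha\sum_{k=2}^p 1/k$. First I would enhance the channel by providing RX~$k$ with the messages $W_1,\ldots,W_{k-1}$ and by revealing the global CSI $\Omega\triangleq(\hat{\mathcal H}^n,\mathcal H^n)$ to all receivers; neither operation can shrink the DoF region, and together they induce a degraded order among RXs $1,\ldots,p$. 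Writing $Y_{[k]}^n\triangleq(Y_1^n,\ldots,Y_k^n)$ and applying Fano's inequality with the independence of the messages from $\Omega$, each rate is controlled by $nR_k \le h(Y_k^n\mid W_{[k-1]},\Omega)-h(Y_k^n\mid W_{[k]},\Omega)+n\,o(\log P)$, so the problem reduces to comparing conditional differential entropies of the received signals across the layers $k=1,\ldots,p$.

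Two lemmas then carry the argument. The first is a statistical-equivalence (exchangeability) lemma: conditioned on the transmitter's causal knowledge $(\mathcal H^{t-1},\hat{\mathcal H}^{t})$, the input $\bm x(t)$ is deterministic given the messages, while the current channels $\bm h_i^{\He}(t)=\hat{\bm h}_i^{\He}(t)+\tilde{\bm h}_i^{\He}(t)$ differ across users only through the estimation errors $\tilde{\bm h}_i(t)$, which are i.i.d.\ across $i$. Hence the outputs $Y_1(t),\ldots,Y_p(t)$ are exchangeable, and this yields a MAT-type chain relating $\tfrac1k h(Y_{[k]}^n\mid W_{[k-1]},\Omega)$ to $\tfrac1{k+1}h(Y_{[k+1]}^n\mid W_{[k]},\Omega)$ up to a correction of order $\alpha\log P$, which is precisely what produces the harmonic weights $1/k$. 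The second is a CSIT-quality lemma quantifying that correction: since any beam $\bm u$ aimed with the current estimate satisfies $\E[|\bm h_i^{\He}\bm u|^2]\sim P^{-\alpha}\E[\|\bm u\|^2]$, the residual interference at an already-decoded receiver lives at power $P^{1-\alpha}$ rather than $P$, which caps the per-layer entropy gap by $\alpha\log P$ up to $o(\log P)$ and injects the factor $\alpha$ into the bound.

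Chaining these two lemmas across $k=1,\ldots,p$ and telescoping, the leading constant $1$ arises from the single-antenna point-to-point ceiling $h(Y_1^n\mid\Omega)\le n\log P+n\,o(\log P)$ for the least-degraded receiver, while each of the subsequent layer transitions contributes one term $\alpha/k$, $k=2,\ldots,p$, accumulating to $\alpha\sum_{k=2}^p 1/k$. The constraints $0\le d_i\le 1$ are immediate from the point-to-point MISO bound for a single user, and since the relabelling is arbitrary the weighted-sum bound holds for every $\pi\in\mathcal S_p$ and every $p$, giving the region $\mathcal D^{\Out}$.

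Finally, the sum-DoF bound~\eqref{eq:Main_1} follows from the region bound by symmetrisation: summing the $p=K$ inequality over all $K!$ permutations, each $d_i$ acquires coefficient $(K-1)!\sum_{k=1}^K 1/k$, so after dividing by $(K-1)!$ one obtains $\bigl(\sum_{k=1}^K 1/k\bigr)\sum_{i=1}^K d_i \le K\bigl(1+\alpha\sum_{k=2}^K 1/k\bigr)$, which is exactly $\DoF^{\Out}$. I expect the main obstacle to be the rigorous statistical-equivalence lemma in the presence of \emph{current} CSIT: one must show that conditioning on the transmitter's causal knowledge preserves enough symmetry across users for the exchangeability step, while simultaneously tracking the $P^{-\alpha}$ scaling of the estimation error so that the harmonic weighting and the $\alpha$-enhancement emerge with the correct constants rather than being double-counted.
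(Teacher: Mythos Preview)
Your high-level skeleton matches the paper's: a genie-aided degraded ordering, Fano's inequality, a telescoping sum producing per-layer entropy differences, and the harmonic weights $1/k$; the sum-DoF then follows by averaging over all permutations exactly as you do. Where you and the paper diverge is in the single hard step: bounding, for each layer, the difference
\[
\frac{1}{m}\,h\bigl(\mathbf{Y}_{[m]}(t)\mid \mathcal{U}(t),\mathbf{H}(t)\bigr)\;-\;\frac{1}{m-1}\,h\bigl(\mathbf{Y}_{[m-1]}(t)\mid \mathcal{U}(t),\mathbf{H}(t)\bigr)\;\le\;\frac{\alpha}{m}\log P+O(1).
\]
The paper does \emph{not} use exchangeability here. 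Instead it (i) maximises over input distributions with a covariance constraint, (ii) invokes the Extremal Inequality of Liu--Viswanath to show a Gaussian input is optimal for this weighted difference of entropies, and (iii) reduces the problem to a deterministic $\log\det$ inequality (their Lemma~2), from which the factor $-\frac{N_1-N_2}{N_1}\log\sigma^2=\frac{1}{m}\alpha\log P$ falls out of the variance $\sigma^2=P^{-\alpha}$ of~$\tilde{\mathbf{H}}$.

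Your proposed route via an ``exchangeability lemma'' has a genuine gap. The claim that, conditioned on $(\mathcal{H}^{t-1},\hat{\mathcal{H}}^t)$, the outputs ``differ across users only through the estimation errors $\tilde{\bm h}_i(t)$'' is not what the converse entropies see. In the telescoped terms you are conditioning only on a \emph{subset} of the messages, so $\bm x(t)$ is still random; moreover the encoder is allowed to beamform asymmetrically using the user-specific estimates $\hat{\bm h}_i(t)$. Thus $Y_i(t)=\hat{\bm h}_i^{\He}\bm x(t)+\tilde{\bm h}_i^{\He}\bm x(t)+z_i$ depends on $i$ through the \emph{random} term $\hat{\bm h}_i^{\He}\bm x(t)$ as well, and the MAT symmetry $h(Y_i\mid\cdot)=h(Y_j\mid\cdot)$ that drives the harmonic inequality for $\alpha=0$ simply fails once $\alpha>0$. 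You cannot repair this by a translation-invariance argument, because the ``shift'' $\hat{\bm h}_i^{\He}\bm x(t)$ is correlated with the unknown part of $\bm x(t)$. Your second lemma is also stated as an \emph{achievability} fact about ZF beams (``any beam $\bm u$ aimed with the current estimate\ldots''), which cannot serve as a converse bound on entropy differences for an arbitrary encoder. What is actually needed is an optimisation over all input laws, and this is precisely the role of the Extremal Inequality in the paper: it replaces the missing symmetry by identifying the worst-case (Gaussian) input, after which the $P^{-\alpha}$ error variance enters through a concrete $\log\det$ computation rather than through a heuristic about residual interference power.
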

\begin{proof}
The detailed proof is provided in Section~\ref{se:Outerbound}.
\end{proof}
It can be seen that this bound subsumes several known outerbounds from the literature. For~$\alpha=0$, it coincides with the optimal DoF achieved by the MAT algorithm while for~$\alpha=1$, the DoF in a MISO BC with perfect CSIT is obtained. Finally, for~$K=2$, this outerbound simplifies to the optimal DoF region provided in~\cite{Yang2012}.

\subsection{Achievable DoF}\label{se:main:out}

The problem of constructing a scheme achieving the outerbound in Theorem~\ref{thm_out} is very intricate and remains open. This is due to the difficulty to combine ZF (which is optimal for $\alpha=1$) with the MAT scheme (optimal for $\alpha=0$). The scheme for the two-user case developed in \cite{Yang2012,Gou2012} avoids this problem by using an alternative version of the MAT scheme developped by Maddah-Ali and Tse in \cite{MaddahAli2010}. In contrast with the original MAT scheme, this alternative version can be nicely combined with ZF such that the optimal DoF could then be achieved\cite{Yang2012,Gou2012}. This alternative version does not seem applicable for more than two users. As a consequence, our first step has been to find a new alignment scheme based on completely outdated CSIT, which, to some extent, generalizes the alternative MAT version to the case of more users. This scheme, denoted hereafter as the $\ALTMAT$ scheme, is described in Section~\ref{se:ALTMAT} and shown to achieve the following DoF.

\begin{theorem}
In the $K$-user MISO BC with completely outdated CSIT ($\alpha=0$), the $\ALTMAT$ scheme achieves a sum DoF equal to 
\begin{equation}
\DoF^{\ALTMAT}=\frac{2K}{K+1}+\frac{1}{n}\left(\frac{2K-3+\frac{2}{K+1}}{\left(\frac{(K-1)}{2}\right)+\frac{1}{n}\left(\frac{K-1}{2}+1\right)+\frac{1}{n}\left(\frac{K(K+1)}{2}-K\right)}\right)
\label{eq:Main_2}
\end{equation}
where the number $n_{\mathrm{TS}}$ of time slots over which the $\ALTMAT$ scheme is spread is
\begin{equation}
n_{\mathrm{TS}}=n\frac{K(K-1)}{2}+\frac{K(K-1)}{2}+\frac{K^2(K+1)}{2}-K(K-1).
\end{equation}
Hence, it holds
\begin{equation}
\lim_{n_{\mathrm{TS}}\rightarrow \infty}\DoF^{\ALTMAT}=\frac{2K}{K+1}.
\end{equation}
\label{thm_ALTMAT}
\end{theorem}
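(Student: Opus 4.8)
The plan is to prove the theorem by explicit construction, analysing the $\ALTMAT$ scheme laid out in Section~\ref{se:ALTMAT} and reducing the DoF claim to a symbol-counting argument. Since $\alpha=0$, the current estimate $\hat{\mathbf{H}}(t)$ carries no DoF and every gain must come from the perfect delayed CSIT; in particular ZF is useless here and the achievability rests entirely on space-time alignment in the spirit of \cite{MaddahAli2010}. First I would organise the $n_{\mathrm{TS}}$ channel uses into three groups of phases: a one-time initialisation phase in which fresh (order-one) symbols are injected and overheard as interference at the unintended RXs, a repeated core phase consisting of $n$ rounds, each spanning $\binom{K}{2}=K(K-1)/2$ slots (one per unordered pair of users), and a final flushing phase that retransmits the residual overheard combinations. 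The appearance of the dominant term $n\,K(K-1)/2$ in $n_{\mathrm{TS}}$ is exactly the footprint of this repeated pairwise core, and summing the slot counts of the three groups must reproduce the stated expression for $n_{\mathrm{TS}}$.

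The heart of the argument is the \emph{alignment and decodability} step. Using the delayed CSIT, the TX reconstructs, one slot after the fact, the exact linear combination of past data symbols that each RX overheard, and it forms retransmitted combinations that are simultaneously useful to several RXs at once. I would prove, for each user $i$, that the equations it collects over the whole block---those it receives directly in the phases where it is a desired user, plus those it reconstructs after subtracting its own known overheard terms---form a full-rank linear system in its intended symbols, so that all of them are decodable, while the interference it sees is perfectly aligned and hence cancellable. This is the MAT principle adapted to the new pairwise phase structure, and the key claim to verify is that every transmitted combination delivers a genuinely new (independent) equation to each of its intended recipients; the generic full-rank assumption on the channel and its submatrices (Section~\ref{se:SM}) guarantees that the relevant coefficient matrices are invertible almost surely.

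Once decodability is established, I would count the total number $N_{\mathrm{sym}}$ of intended symbols delivered across all users and convert to DoF. Because $\alpha=0$ implies every resolved symbol is received with an effective SNR scaling as $P$, each decoded symbol contributes exactly $\log P + o(\log P)$ to the rate, so the sum DoF equals $N_{\mathrm{sym}}/n_{\mathrm{TS}}$; matching this ratio against the given $n_{\mathrm{TS}}$ yields precisely~\eqref{eq:Main_2}. The asymptotic statement then follows by letting $n\rightarrow\infty$: the initialisation and flushing overheads contribute terms that are constant in $n$ (or grow strictly slower than $n\,K(K-1)/2$), so both $N_{\mathrm{sym}}$ and $n_{\mathrm{TS}}$ are dominated by their linear-in-$n$ parts and the ratio tends to $2K/(K+1)$.

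The step I expect to be the main obstacle is the decodability and alignment verification, not the arithmetic. Establishing that each retransmitted combination is simultaneously fresh for all its intended recipients---and that, globally, the interference seen by every user collapses onto a subspace it can strip away using previously overheard and now-reconstructed signals---requires tracking the precise dimension of the signal and interference subspaces through all three phase groups. The bookkeeping that pins down $N_{\mathrm{sym}}$ and reproduces the exact $n_{\mathrm{TS}}$ expression is tedious but routine once the alignment structure is in place; confirming that the pairwise core genuinely generalises the two-user alternative MAT while remaining compatible with a later ZF overlay (the reason this suboptimal scheme is worth building) is the conceptually delicate part.
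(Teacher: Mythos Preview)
Your three-block decomposition (initialisation / $n$ iterated rounds / termination) and the final reduction to the ratio $N_{\mathrm{sym}}/n_{\mathrm{TS}}$ are exactly how the paper organises the argument, and your reasoning for the limit $2K/(K+1)$ is correct. Two points, however, separate your plan from the paper's proof and would need to be repaired.

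First, the description of the core round as ``one slot per unordered pair of users'' is not the structure of the $\ALTMAT$ iteration. The paper's building block is the \emph{order-$j$ phase} (Section~\ref{se:ALTMAT:Kusers}): in a single slot the TX simultaneously transmits $K-j+1$ order-$j$ symbols to the group $\{1,\ldots,j\}$ and $j+1$ order-$(K-j)$ symbols to the complementary group $\{j+1,\ldots,K\}$. The overheard interferences become order-$(j+1)$ and order-$(K-j+1)$ messages, respectively, and one iteration runs through the order-$j$ phases for $j=1,\ldots,\lfloor (K-1)/2\rfloor$ (plus the $K$ circular user permutations), not through the $\binom{K}{2}$ unordered pairs. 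A genuinely pairwise round is the two-user alternative MAT run across pairs, which is a different scheme and will not reproduce \eqref{eq:Main_2}.

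Second, the paper does \emph{not} treat decodability and alignment as the hard step. Once the complementary-group phase is fixed, each RX receives exactly as many independent equations as unknowns, and the interference terms are precisely the higher-order messages being forwarded; the generic full-rank assumption dispatches this in one line. The technical crux is Lemma~\ref{lemma_ALTMAT}: for every $j\neq 1,K$, the number of order-$j$ messages \emph{consumed} in one iteration equals the number \emph{generated}. This conservation law, proved by a telescoping manipulation of the recurrence~\eqref{eq:ALTMAT_8}, is what lets the scheme iterate indefinitely without accumulating a backlog of intermediate-order messages and is what pins the per-iteration DoF at $2K/(K+1)$. Your plan never identifies this lemma; without it the symbol count $N_{\mathrm{sym}}$ cannot be closed and the bookkeeping you call ``routine'' does not go through. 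You should replace the subspace-tracking programme by (i) the order-$j$/order-$(K-j)$ phase description and (ii) the conservation lemma.
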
 
The $\ALTMAT$ scheme can easily be adapted to exploit the correlation between the delayed CSIT and the instantaneous channel state. The modified scheme, denoted as the $\KMAT$ scheme, will then be shown in Section~\ref{se:KMAT} to achieve the following DoF.
\begin{theorem}
In the $K$-user MISO BC with perfect delayed CSIT and current CSIT with quality exponent~$\alpha$, the DoF achieved with the $\KMAT$ scheme is equal to
\begin{equation}
\DoF^{\KMAT}=(1-\alpha)\DoF^{\ALTMAT}+\alpha\DoF^{\ZF}
\label{eq:Main_3}
\end{equation}
with $\DoF^{\ZF}=K$.
\label{thm_KMAT}
\end{theorem}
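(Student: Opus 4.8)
\emph{Proof plan.} The plan is to build the $\KMAT$ scheme as a two-layer superposition, in the power domain, of a Zero-Forcing layer exploiting the \emph{current} estimate $\hat{\mathbf{H}}(t)$ and of the $\ALTMAT$ scheme of Theorem~\ref{thm_ALTMAT} operating on top of the delayed CSIT, and then to show that the two layers contribute their DoF essentially independently. Concretely, I would keep the $n_{\mathrm{TS}}$-slot timing structure of $\ALTMAT$ and, in every slot, transmit $\bm{x}(t)=\sum_{j=1}^{K}\bm{w}_j(t)s_j(t)+\bm{x}^{\ALTMAT}(t)$, where $\bm{w}_j(t)$ is the unit-norm ZF precoder computed from $\hat{\mathbf{H}}(t)$ so that $\hat{\bm{h}}_i^{\He}(t)\bm{w}_j(t)=0$ for $i\neq j$, the private symbols $s_j$ carry power $\Theta(P)$, and the $\ALTMAT$ symbols in $\bm{x}^{\ALTMAT}(t)$ carry the reduced power $\Theta(P^{1-\alpha})$. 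The total power is $\Theta(P)$, so the constraint is met.

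For the \emph{top (ZF) layer}, I would use the fact recalled after~\eqref{eq:SM_3}, namely $\E[|\bm{h}_i^{\He}\bm{w}_j|^2]\sim P^{-\alpha}\E[\|\bm{w}_j\|^2]$, to argue that at RX~$i$ the desired symbol $s_i$ arrives at power $\Theta(P)$ while every other private symbol leaks only $\Theta(P^{1-\alpha})$, and that the $\ALTMAT$ layer also sits at $\Theta(P^{1-\alpha})$. Decoding $s_i$ first while treating everything below as noise therefore yields an $\mathrm{SINR}_i\sim P/P^{1-\alpha}=P^{\alpha}$, i.e.\ $\alpha$ DoF per user and a total contribution of $\alpha K=\alpha\DoF^{\ZF}$. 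After decoding, each RX removes its own $s_i$ by successive interference cancellation.

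For the \emph{bottom ($\ALTMAT$) layer}, the guiding principle is the scaling observation that a scheme achieving sum DoF $D$ at SNR $P$ achieves $\beta D$ when its signals are confined to power $P^{\beta}$ and the effective noise floor is $\Theta(1)$. Here the $\ALTMAT$ symbols live at power $P^{1-\alpha}$, so they should deliver $(1-\alpha)\DoF^{\ALTMAT}$ \emph{provided} the residual noise seen by the $\ALTMAT$ decoder is $\Theta(1)$. This is precisely where the main obstacle lies: after cancelling $s_i$, RX~$i$ still sees the residual ZF interference $\sum_{j\neq i}\bm{h}_i^{\He}\bm{w}_j s_j$, which sits at the \emph{same} level $\Theta(P^{1-\alpha})$ as the $\ALTMAT$ symbols and would otherwise collapse the effective SNR of the bottom layer to $\Theta(1)$, killing its DoF. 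The crux of the proof is to neutralise this term using the \emph{perfect delayed} CSIT: since the TX knows $\mathbf{H}(t)$ one slot later, together with the precoders $\bm{w}_j(t)$ and the symbols $s_j(t)$ it sent, it can reconstruct each leakage $\sum_{j\neq i}\bm{h}_i^{\He}\bm{w}_j s_j$ exactly. These residual terms then play, for the $\ALTMAT$ layer, exactly the role that overheard interference plays in the plain $\ALTMAT$ construction: they are folded into the delayed-CSIT-based alignment and retransmission so that every RX can subtract its residual interference, restoring an effective noise floor of $\Theta(1)$ and hence an effective SNR of $P^{1-\alpha}$ for the aligned symbols.

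Granting this neutralisation, the bottom layer reproduces the $\ALTMAT$ analysis of Section~\ref{se:ALTMAT} verbatim but at SNR $P^{1-\alpha}$, yielding $(1-\alpha)\DoF^{\ALTMAT}$, and adding the two contributions gives $\DoF^{\KMAT}=(1-\alpha)\DoF^{\ALTMAT}+\alpha\DoF^{\ZF}$. The delicate points to verify rigorously are that the bottom layer raises the top layer's effective noise by no more than $\Theta(P^{1-\alpha})$ (already used above), that the $\ALTMAT$ retransmissions themselves require only delayed CSIT and do not disturb the $\alpha$ DoF counted for the private symbols, and that the reconstruction and alignment of the residual ZF interference are consistent across all $n_{\mathrm{TS}}$ slots. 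This cross-slot bookkeeping, rather than any single SINR computation, is where I expect the real work to be.
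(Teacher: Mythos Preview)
Your layering is inverted relative to the paper, and the inversion creates a problem that your proposed ``neutralisation'' does not fix. With the private ZF symbols at power~$\Theta(P)$ and decoded first, the leftover leakage $r_k(t)=\sum_{j\neq k}\bm{h}_k^{\He}(t)\bm{w}_j(t)s_j(t)$ at RX~$k$ sits at $\Theta(P^{1-\alpha})$, and you claim these residuals ``play exactly the role that overheard interference plays'' in $\ALTMAT$. They do not. The $\ALTMAT$ retransmissions work because each overheard term is simultaneously interference for one RX and a useful extra equation for a \emph{group} of other RXs, i.e.\ it is an order-$j$ message with $j\geq 2$. The residual $r_k(t)$ is useful to RX~$k$ only: every other RX has already decoded its own private symbol and has no use for an equation in $\{s_j\}_{j\neq k}$. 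If you merge $r_k(t)$ with the $\ALTMAT$ interference $\bm{h}_k^{\He}\bm{u}_{K-j}$ before retransmission, a RX~$\ell>j$ can no longer isolate the equation it needs: it can peel off $\bm{h}_k^{\He}\bm{w}_{\ell}s_{\ell}$ but not $\sum_{i\neq k,\ell}\bm{h}_k^{\He}\bm{w}_i s_i$, which sits at the same level $P^{1-\alpha}$ as the useful part. If instead you retransmit the $r_k$'s separately, you inject $K$ fresh order-$1$ messages of rate $(1-\alpha)\log P$ per slot, the balance of Lemma~\ref{lemma_ALTMAT} is destroyed, and the resulting DoF falls strictly below $(1-\alpha)\DoF^{\ALTMAT}+\alpha K$. (For $K=2$ your idea happens to go through, since $r_1$ depends only on $s_2$, which RX~$2$ knows; the obstruction appears precisely for $K\geq 3$.)

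The paper sidesteps this by reversing the power allocation: the ZF private symbols are sent at power $\Theta(P^{\alpha})$, so their cross-leakage is already at $\Theta(P^{0})$ and never needs cleanup. The $\ALTMAT$ symbols sit on top, with one symbol per group boosted to power $\Theta(P)$ but ZF-precoded against the non-intended group (so it too leaks only at $\Theta(P^{1-\alpha})$) and the remaining $\ALTMAT$ symbols at $\Theta(P^{1-\alpha})$. The $\ALTMAT$ layer is decoded first---after the usual retransmissions each recovered equation has effective SNR $\sim P^{1-\alpha}$, matching the per-symbol rate $(1-\alpha)\log P$---and only then are the ZF symbols decoded at SNR $\sim P^{\alpha}$ against $\Theta(1)$ noise. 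With this ordering the sum $(1-\alpha)\DoF^{\ALTMAT}+\alpha K$ follows directly, with no delayed-CSIT repair of the ZF residual required.
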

The DoF achieved with ZF for the CSIT quality exponent $\alpha$ is well known to be equal to the second term of~\eqref{eq:Main_3} \cite{Jindal2006}. Hence, the $\KMAT$ scheme outperforms ZF and appears as a robust ZF scheme with respect to delay in the CSIT. The first term of \eqref{eq:Main_3} is the DoF improvement.

\section{The $\ALTMAT$ Scheme}\label{se:ALTMAT}


Similarly to the MAT scheme, the $\ALTMAT$ scheme does not exploit the correlation in time and hence treats the estimate as completely ``stale". Although suboptimal, the $\ALTMAT$ scheme can be easily adapted to exploit the time-correlation and henceforth will be a key component to develop a scheme which outperforms both $\MAT$ and ZF when $\alpha>0$. Similarly to \cite{MaddahAli2012}, a DoF strictly larger than one will be achieved by exploiting the broadcast nature of the channel. This means that a message destined to $j$~users (called order-$j$ messages) will be overheard by another $K-j$~users, hence providing side information which can be exploited. As a consequence, we will also define $\DoF_j$ as the DoF with which order-$j$ messages are transmitted. Note that with this notation, our objective is to transmit order-$1$ messages and to maximize $\DoF_1$.

When no confusion is possible, we omit to mention the dependency of the channels as a function of the time $t$.

\subsection{Example of the $\ALTMAT$ Scheme for $K=3$}\label{se:ALTMAT:3users}

The $\ALTMAT$ scheme consists of one initialization step, followed by a number of ``main iteration" steps and is ended by a termination step.
\begin{itemize}
\item \emph{Step~$1$--Initialization--} This step consists of $3$ time slots and takes as input $4$ order-$1$ symbols for every user. During the first time slot, the vector $\bm{u}_1\in \mathbb{C}^{2\times 1}$ containing $2$~data symbols for RX~$1$ and the vector~$\bm{u}_2\in \mathbb{C}^{2\times 1}$ containing $2$ data symbols for RX~$2$ are transmitted. The received signal at RX~$i$ can then be written as
\begin{equation}
y_i=\bm{h}_i^{\He}\bm{u}_1+\bm{h}_i^{\He}\bm{u}_{2}+z_i. 
\label{eq:ALTMAT_3}
\end{equation}
Following the same philosophy as the alternative form of the MAT scheme \cite{MaddahAli2012}, the interferences $\bm{h}_1^{\He}\bm{u}_{2}$ and $\bm{h}_2^{\He}\bm{u}_{1}$ are transmitted to both RX~$1$ and RX~$2$. Indeed, these equations are needed at both RXs because they represent, for one of them, the received interference, and for the other, a second independent observation of the desired signal. Hence, the transmission of the $4$ order-$1$ data symbols has been replaced by the transmission of $2$~order-$2$ data symbols. During the second (resp. the third) time slot, the same transmission scheme is used to transmit to RX~$2$ and RX~$3$ (resp. RX~$3$ and RX~$1$).
\item \emph{From step~$2$ to step~$n+1$--Main iteration step--} We assume that $6$ order-$2$ data symbols need to be transmitted to every user from the previous step. This phase is spread over $6$ time slots and takes as input $3$ order-$1$ messages for each user as well as the $6$ order-$2$ messages from the previous step.

In the first time slot, $3$ order-$1$ messages are transmitted to RX~$1$ while $2$~order-$2$ messages are transmitted to RX~$2$ and RX~$3$. We define the vector~$\bm{u}_1\in \mathbb{C}^{3\times 1}$ containing the $3$ order-$1$ messages and the vector~$\bm{u}_{23}\in \mathbb{C}^{2\times 1}$ containing the two order-$2$ messages. The received signal at RX~$i$ reads then as
\begin{equation}
y_i=\bm{h}_i^{\He}\bm{u}_1+\bm{h}_i^{\He}\bm{u}_{23}+z_i. 
\label{eq:ALTMAT_4}
\end{equation}
Let the interference $\bm{h}_1^{\He}\bm{u}_{23}$ be transmitted to all the RXs, the interference $\bm{h}_2^{\He}\bm{u}_{1}$ be transmitted to RX~$1$ and RX~$2$ and the interference $\bm{h}_3^{\He}\bm{u}_{1}$ to RX~$1$ and RX~$3$. Each RX can then decode its desired data symbols. Indeed, each RX could then remove the interference received as well as receive the right number of additional independent equations to decode its desired messages. Thus, $\bm{h}_1^{\He}\bm{u}_{23}$ can be seen as an order-$3$ message while $\bm{h}_2^{\He}\bm{u}_{1}$ and $\bm{h}_3^{\He}\bm{u}_{1}$ are order-$2$ messages. The transmission of the input data symbols has been replaced by the transmission of two order-$2$ messages and one order-$3$ message. During the two following time slots, the same transmission occurs after having permuted circularly the role of the RXs.

Finally, the three order-$3$ data symbols are broadcasted, which requires $3$ time slots. In total, $6$ order-$2$ data symbols have been transmitted and $9$ order-$1$ data symbols. At the same time, $6$ order-$2$ messages have been generated (from the overheard interference) and have to be transmitted in the following step.
\item \emph{Step~$n+2$-Termination-} At the beginning of this phase, $6$ order-$2$ data symbols have to be transmitted. This is carried out by simple broadcasting, and hence requires~$6$ time slots.
\end{itemize}
In total, $12+9n$ order-$1$ data symbols have been transmitted in $6+6n+6$ time slots. After simplifications, the DoF given in Theorem~\ref{thm_ALTMAT} is then obtained. As the number of main iteration steps~$n$ increases, the DoF converges to $3/2$.

The mains steps of the $\ALTMAT$ scheme for $K=3$ are illustrated in Fig.~\ref{ALTMAT_Scheme}. A particularity of $\ALTMAT$ is that symbols of different orders are sent at the same time. 

Note that the number of order-$2$ symbols transmitted is exactly equal to the number of order-$2$ messages created. This represents a particular case and for $K>3$, it will be necessary to consider several transmissions of symbols of different orders so as to reach an equilibrium where the number of data symbols of order-$j$ with $j\geq 2$ taken as input equals the number of symbols of order~$j$.

\begin{figure}
\centering
\includegraphics[width=1\columnwidth]{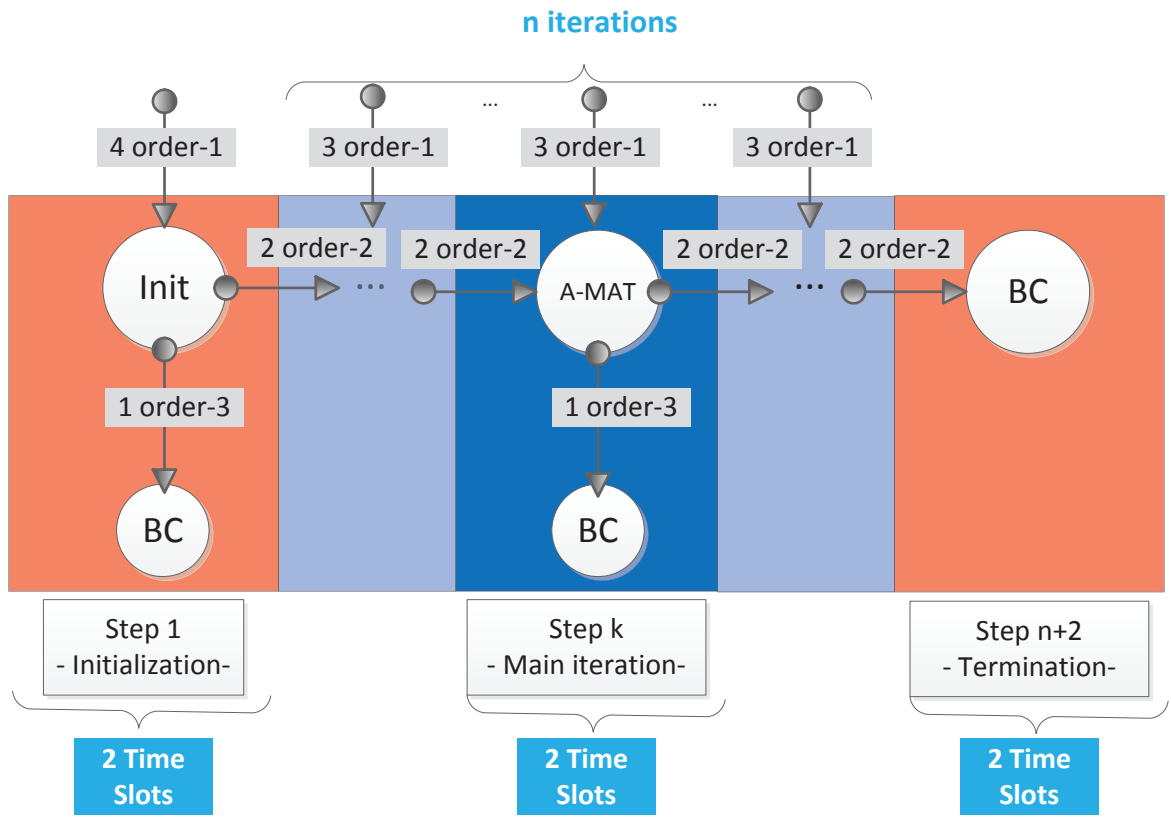}
\caption{Symbolic representation of the $\ALTMAT$ scheme for $K=3$ users.}
\label{ALTMAT_Scheme}
\end{figure}

\subsection{Description of the $\ALTMAT$ Scheme}\label{se:ALTMAT:Kusers}
We will now describe the $\ALTMAT$ scheme for arbitrary values of $K$. The $\ALTMAT$ algorithm can be divided in distinct phases which we denote as \emph{order-$j$ phase}. We will start by presenting the order-$j$ phase before moving to the description of how such phases are combined in the $\ALTMAT$ scheme.

Note that each step should be carried out~$K$ times for the $K$~circular permutations of the users. This is necessary to ensure that every user is transmitted the same number of data symbols. For clarity, we will present the scheme for one particular RX configuration only.

\subsubsection{Order-$j$ Phase}
The order-$j$ phase consists in the simultaneous transmission of messages of order-$j$ and of messages of order-$(K-j)$. We assume wlog that the order-$j$ messages are destined to RX~$1$, RX~$2$, $\ldots$, RX~$j$, while the order-$(K-j)$ messages are destined to the remaining $K-j$~users. We will discuss later on how these messages of order-$j$ and order-$(K-j)$ are obtained. In one time slot, the vector~$\bm{u}_j\in \mathbb{C}^{(K-j+1)\times 1}$ containing the $K-j+1$ data symbols of order-$j$ and the vector~$\bm{u}_{K-j} \in \mathbb{C}^{(j+1)\times 1}$ containing the $j+1$ data symbols of order-$(K-j)$ are transmitted.

Hence, the received signal at RX~$i$ can be written as
\begin{equation}
y_i=\bm{h}_i^{\He}\bm{u}_j+\bm{h}_i^{\He}\bm{u}_{K-j}+z_i. 
\label{eq:ALTMAT_5}
\end{equation}
For $i=1,\ldots,j$, $\bm{h}_i^{\He}\bm{u}_{K-j}$ represents an interfering signal which is desired at RX~$i$ in order to remove the interference. Yet, this is also of interest to RX~$k$ for $k=j+1,\ldots,K$ since it represents an additional equation in $\bm{u}_{K-j}$. Thus, $\bm{h}_i^{\He}\bm{u}_{K-j}$ can be seen as an order-$(K-j+1)$ message. 

Similarly, for $i=j+1,\ldots,K$, $\bm{h}_i^{\He}\bm{u}_{j}$ represents an interfering signal at RX~$i$ but is also of interest to RX~$k$ for $k=1,\ldots,j$. The messages $\bm{h}_i^{\He}\bm{u}_{j}$ for $i=j+1,\ldots,K$ are then of order-$(j+1)$. 

If the $j$~order-$(K-j+1)$~messages and the $K-j$~order-$(j+1)$ messages are transmitted to the RXs who desire these messages, each RX can be seen to have enough interference-free equations to decode its messages. Indeed, the first $j$ (resp. last $K-j$) RXs have received $K-j+1$ (resp. $j+1$) independent equations, which is exactly equal to the number of independent data symbols that they need to decode. The number of time slots~$n_{\mathrm{TS}}$ required for this is then equal to
\begin{equation}
n_{\mathrm{TS}}=\frac{K-j}{\DoF_{j+1}}+\frac{j}{\DoF_{K-j+1}}+1
\label{eq:ALTMAT_6}
\end{equation}
where the addition of a $1$ corresponds to the one time slot used for the transmission in \eqref{eq:ALTMAT_5}. During the $n_{\mathrm{TS}}$ time slots, $K-j+1$ order-$j$ messages and $j+1$ order-$(K-j)$ messages can then be successfully transmitted. From the definition of the DoF, we can then also write $n_{\mathrm{TS}}$ as
\begin{equation}
n_{\mathrm{TS}}=\frac{K-j+1}{\DoF_{j}}+\frac{j+1}{\DoF_{K-j}}.
\label{eq:ALTMAT_7}
\end{equation}
Putting together~\eqref{eq:ALTMAT_7} and \eqref{eq:ALTMAT_6} yields
\begin{equation}
\frac{j+1}{\DoF_{K-j}(K,K)}+\frac{K-j+1}{\DoF_{j}}=\frac{K-j}{\DoF_{j+1}(K,K)}+\frac{j}{\DoF_{K-j+1}}+1.
\label{eq:ALTMAT_8}
\end{equation}
 
\subsubsection{The $\ALTMAT$ Scheme}\label{se:ALTMAT:Kusers:full}

The order-$j$ phase assumes that messages of order-$j$ and messages of order-$(K-j)$ need to be transmitted. We will now show how the order-$j$ phase are combined in the $\ALTMAT$ scheme to allow for the transmission of order-$1$ data symbols.

The proof that the $\ALTMAT$ scheme successfully transmit the data symbols and the derivation of the DoF will be done in the following subsection. We present the $\ALTMAT$ for the case $K$ odd and the modifications required when $K$ is even will be described hereafter.
\begin{itemize}
\item \emph{Step~$1$--Initialization--} The order-$j$ phase is carried out for $j=1,\ldots,(K-1)/2$ but for every phase, the messages of higher order are replaced by the order-$1$ symbols that we aim at transmitting. This is done by choosing arbitrarily any RX among the $j$ destined RXs since the messages are transmitted so as to be decoded at each of the $j$~RXs. This step is spread over $(K-1)/2$ time slots and leads to the creation of messages of order~$j$ for $j=2,\ldots,K$. The number of messages of order-$j$ generated can be obtained from \eqref{eq:ALTMAT_8}. One message of order-$K$ is generated and is directly transmitted via broadcasting.

Note that for clarity a different initialization has been used for~$K=3$ in Subsection\ref{se:ALTMAT:3users}.

\item \emph{Step~$2$ to step~$(n+1)$--Main Iteration--} For every iteration step, all the order-$j$ phases are carried out once for $j=1,\ldots,(K-1)/2$. At the $n$th step, the order-$j$ data symbols being sent are the ones which have been generated during step~$(n-1)$, where the initialization corresponds to step~$0$. The verification that the number of data symbols created matches the number of data symbols needed as inputs will be done in the next subsection.
\item \emph{Step~$n+2$--Termination--} All the data symbols which need to be transmitted are simply broadcasted. This phase can be seen after summation of all the equations given by \eqref{eq:ALTMAT_7} to require $K(K+1)/2-1-(K-1)$ time slots. 
\end{itemize}
If $K$ is even $(K-1)/2$ is replaced by $K/2-1$ and the order-$K/2$ phase is carried out only one time every two steps. The number of time slots used for the termination remains unchanged.
 
\subsection{Sum DoF Achieved}
We will now show that this scheme can indeed be used to achieve the DoF given in Theorem~\ref{thm_ALTMAT}. We start by proving the following lemma.
\begin{lemma}
For every~$j\neq 1,K$, the number of data symbols taken as input in one $\ALTMAT$ iteration is equal to the number of order~$j$ messages generated in such an iteration.
\label{lemma_ALTMAT}
\end{lemma}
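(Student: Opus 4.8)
The plan is to set up a balance equation (a "flow conservation" argument) that counts, for each message order $j$ with $2 \leq j \leq K-1$, how many order-$j$ messages are \emph{consumed} as inputs during one full iteration versus how many are \emph{created} as overheard side information, and then show these two counts coincide. The whole iteration consists of running the order-$j$ phase once for each $j = 1, \ldots, (K-1)/2$ (for $K$ odd), with each phase additionally repeated over the $K$ circular permutations of the users. The key observation, already recorded in the excerpt, is that a single order-$j$ phase transmission \eqref{eq:ALTMAT_5} \emph{consumes} $K-j+1$ order-$j$ symbols and $j+1$ order-$(K-j)$ symbols, while \emph{generating} $j$ messages of order-$(K-j+1)$ and $K-j$ messages of order-$(j+1)$.

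\textbf{First} I would tabulate, for a fixed target order $j$, every phase in the iteration that either consumes or produces order-$j$ messages. An order-$j$ message is consumed by the order-$j$ phase itself (contributing $K-j+1$) and by the order-$(K-j)$ phase (contributing $j+1$, since that phase transmits order-$(K-j)$ and order-$j$ symbols jointly). An order-$j$ message is produced as the higher-order byproduct of the order-$(j-1)$ phase (which generates order-$(K-(j-1)+1) = $ and order-$j$ messages, the latter numbering $K-(j-1) = K-j+1$) and as the byproduct of the order-$(K-j+1)$ phase. \textbf{Second} I would account for the factor of $K$ coming from the circular permutations and check that, because every phase is run over all $K$ rotations, the permutation multiplicity appears symmetrically on both the consumption and production sides and therefore cancels. \textbf{Third}, summing the contributions, I would verify the algebraic identity that the total number of order-$j$ messages produced per iteration equals $K-j+1$ plus $j+1$ — matching exactly the consumption count — using the symmetry $j \leftrightarrow K-j$ that pairs the order-$j$ phase with the order-$(K-j)$ phase.

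The cleanest route is to exploit the pairing structure directly. Because the order-$j$ phase and the order-$(K-j)$ phase are linked through \eqref{eq:ALTMAT_8}, the messages created by one phase are precisely those needed by its partner phase in the next iteration. I would make this rigorous by defining $G_j$ as the net number of order-$j$ messages generated minus consumed in one iteration (summed over all phases and all $K$ permutations) and showing $G_j = 0$ for each $j \neq 1, K$. The reindexing $j \mapsto K-j$ on the phase labels, combined with the explicit generation counts, should collapse the sum to a telescoping or self-cancelling form.

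\textbf{The main obstacle} I anticipate is bookkeeping rather than conceptual: correctly matching each created message-order to the phase that later consumes it, while keeping track of which phases are actually executed (only $j \leq (K-1)/2$, with the partner orders $K-j \geq (K+1)/2$ appearing only as \emph{inputs}, never as standalone phases), and handling the endpoint orders $j=2$ and $j=K-1$ where one of the contributing phases may be the initialization or may coincide with the order-$(K-1)/2$ boundary phase. The even-$K$ case, where the order-$K/2$ phase is self-paired and run only once every two steps, will need a separate check, and I would isolate it as a short remark rather than folding it into the main count. Care is also required to confirm that the "arbitrary RX" substitution used in initialization does not perturb the steady-state balance for the intermediate orders, which it does not precisely because $j=1$ and $j=K$ are excluded from the lemma.
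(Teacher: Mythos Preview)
Your direct message-counting approach is a genuinely different route from the paper's, and it can be made to work, but the bookkeeping as you have stated it is wrong in a way that would derail the argument if followed literally.

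The central confusion: the order-$j$ phase and the order-$(K-j)$ phase are not two distinct phases. By construction each phase already transmits order-$j$ and order-$(K-j)$ symbols simultaneously, so only the indices $j = 1,\ldots,(K-1)/2$ label distinct phases in one iteration, and the labels $j$ and $K-j$ refer to the \emph{same} transmission. Hence for a fixed intermediate order~$m$, exactly one phase in the iteration consumes order-$m$ messages and exactly one produces them --- not two of each as your tabulation assumes. Your total consumption $(K-j+1)+(j+1)$ therefore double-counts; moreover the second summand is itself wrong, since the phase labeled $K-j$ carries $(K-j)+1$ symbols of order~$j$, not $j+1$. If you redo the count carefully you will find that consumption and production each equal $K-m+1$ (per time slot, before the factor of~$K$ from the circular permutations) for every $m\in\{2,\ldots,K-1\}$, and the lemma follows.

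The paper sidesteps this bookkeeping entirely. It writes the $\DoF$ balance identity \eqref{eq:ALTMAT_8} for each phase index, observes that the identities for consecutive indices share terms (in $\DoF_{j+1}$ and $\DoF_{K-j}$) that cancel pairwise, and sums from $j=1$ up to $(K-1)/2$. Every $\DoF_m$ with $2\le m\le K-1$ telescopes out, leaving only $K/\DoF_1 = 1/\DoF_K + (K-1)/2$; the disappearance of the intermediate-order terms from the summed identity \emph{is} the lemma. The even-$K$ case is then a one-line correction for the self-paired order-$K/2$ phase. Your ``cleanest route'' paragraph points in this direction but misreads \eqref{eq:ALTMAT_8}: it is a single relation among four $\DoF$ values arising from \emph{one} phase, not a link between two distinct phases.
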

\begin{proof}
A detailed proof is provided in Appendix~\ref{app:Lemma_ALTMAT}.
\end{proof}
Using Lemma~\ref{lemma_ALTMAT}, we can compute the DoF achieved by the $\ALTMAT$ scheme by observing how many time slots are used and how many order-$1$ data symbols could be transmitted during those time slots. Let us consider for the moment $K$ to be odd.  
\begin{itemize}
\item \emph{--Initialization--} The initialization step is spread over $(K+1)/2$ time slots and $K(K+1)/2-1$ order-$1$ data symbols are taken as input.
\item \emph{--Main iteration step--} At every time iteration, $K$ order-$1$ data symbols are taken as input and each iteration is spread over $(K+1)/2$ time slots. According to Lemma~\ref{lemma_ALTMAT}, the number of order-$j$ symbols created in every iteration with $j\geq 2$, is the same as the number of order-$j$ messages transmitted. Thus, the DoF of one iteration step is $K/((K+1)/2)=2K/(K+1)$.
\item \emph{--Termination--} The termination step requires $K(K+1)/2-K$ time slots to broadcast all the remaining data symbols.
\end{itemize}

To compute the DoF achieved, it is necessary to take into account the need to consider for every steps the $K$~circular permutations between the users. Hence, the total number of time slots over which the $\ALTMAT$ scheme is spread is equal to 
\begin{equation}
n_{\mathrm{TS}}=K\left(\frac{K+1}{2}+n\frac{K+1}{2}+\frac{K(K+1)-2(K-1)}{2}\right)
\label{eq:ALTMAT_8}
\end{equation}
where the first term in the RHS of \eqref{eq:ALTMAT_7} corresponds to the initialization, the second term to the $n$~main iteration steps, and the third one to the termination step.

In total, the DoF achieved by the $\ALTMAT$ after $n$~steps is then
\begin{equation}
\DoF^{\ALTMAT}_{1}(K,K)=\frac{K\left(K(K+1)-1+n\frac{K+1}{2}\left(\frac{2K}{K+1}\right)\right)}{K\left(\frac{(K+1)}{2}\right)+n\left(\frac{K+1}{2}\right)+\left(\frac{K(K+1)-2(K-1)}{2}\right)}
\label{eq:ALTMAT_9}
\end{equation}
which gives after some basic manipulations the expression in Theorem~\ref{thm_ALTMAT}.

As the number of time slots increases, the $\ALTMAT$ scheme achieves a DoF of $2K/(K+1)$ based on completely outdated CSIT. Although the sum DoF of this new scheme is smaller than the one achieved with $\MAT$, it provides an alternative way to exploit delayed CSIT which will make the exploitation of the prediction obtained from the delayed CSIT more applicable. The $\ALTMAT$ scheme is compared to the $\MAT$ scheme in Fig.~\ref{SumDoF_ALTMAT}.

\begin{figure}
\centering
\includegraphics[width=1\columnwidth]{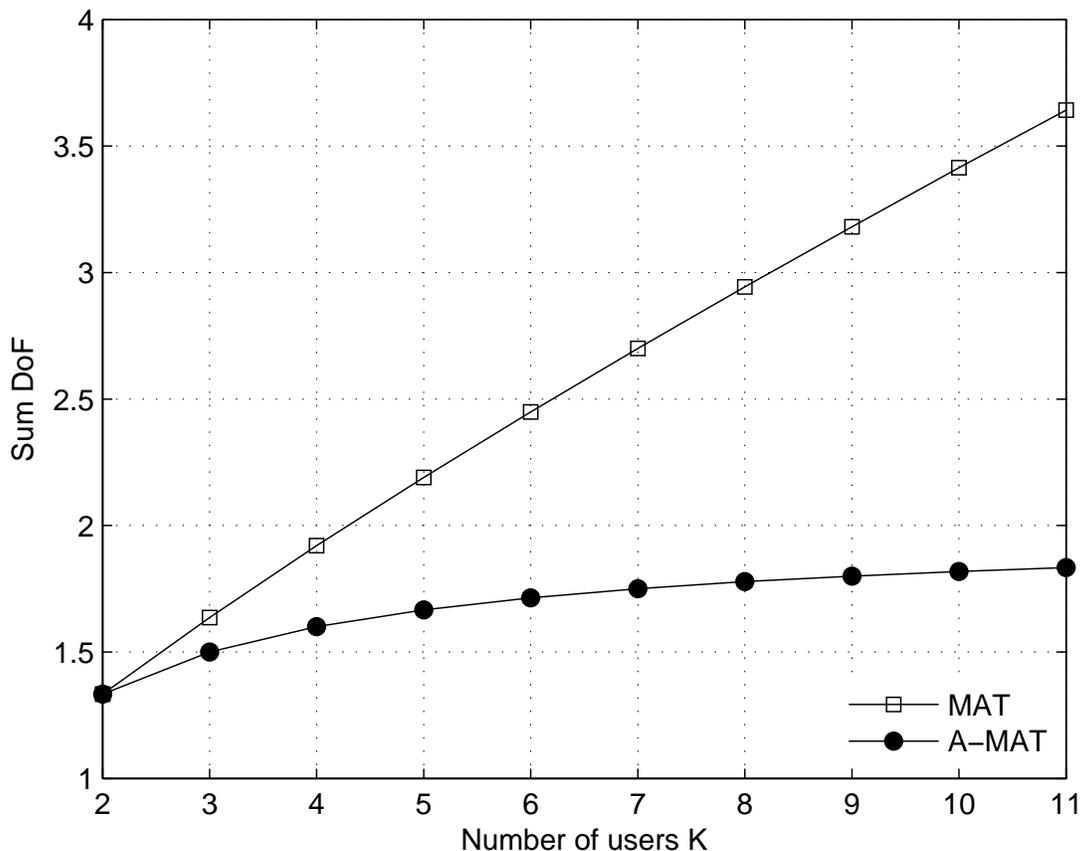}
\caption{Sum DoF in terms of the number of users~$K$.}
\label{SumDoF_ALTMAT}
\end{figure}

\FloatBarrier
\section{The $\KMAT$ Scheme}\label{se:KMAT}
When the CSIT is completely outdated ($\alpha=0$), we will use our new $\ALTMAT$ scheme in place of the MAT scheme. In the other extreme, when $\alpha=1$, ZF is well known to be DoF achieving. Thus, it remains to develop a scheme for the intermediate values of the CSIT quality exponent~$\alpha$. Extending the $\ALTMAT$ scheme to this case will in fact prove to be very easy: The DoF achieved with the modified scheme, which we denote as the $\KMAT$ scheme, will go linearly from the DoF achieved with the $\ALTMAT$ scheme to the DoF achieved with ZF as the CSIT quality exponent~$\alpha$ increases. 

Note that the sum DoF obtained with the outer bound given in Theorem~\ref{thm_out} for a CSIT quality exponent~$\alpha$ is equal to~$(1-\alpha)\DoF^{\MAT}+\alpha\DoF^{\ZF}$ where $\DoF^{\MAT}$ is the DoF achieved with $\MAT$ alignement. Hence, if $\ALTMAT$ were optimal for $\alpha=0$, $\KMAT$ would then be optimal for arbitrary values of $\alpha$. This it the case for $K=2$ where $\ALTMAT$ coincides with the alternative version of $\MAT$. As a consequence, the $\KMAT$ scheme is also optimal. In fact, the $\KMAT$ scheme matches then with the optimal scheme from \cite{Yang2012,Gou2012}.

We will start by describing the different steps of the $\KMAT$ scheme before moving to the analysis of the DoF achieved. 
\subsection{Description of the $\KMAT$ Scheme}\label{se:KMAT:KMAT}
We will show how the order-$j$ phase of the $\ALTMAT$ scheme is modified to exploit the correlation between the delayed CSIT and the instantaneous channel. The full $\KMAT$ scheme follows then trivially from the description of the $\ALTMAT$ scheme in Section~\ref{se:ALTMAT}.

We assume wlog that the order-$j$ symbols are destined to the first $j$~TXs and the order-$(K-j)$ symbols to the $K-j$ last RXs.
\begin{itemize}
\item \emph{Direct Transmission: } 
\paragraph{The $\ALTMAT$ Data Symbols} According to the $\ALTMAT$ scheme, the TX transmit $K-j+1$ order-$j$ messages and $j+1$ order-$(K-j)$ messages. Yet, the data symbols are this time precoded. The $i$th order-$j$ data symbol is precoded to form the vector~$\bm{a}^{(j)}_i\in \mathbb{C}^{M\times 1}$ while the $k$th order-$(K-j)$ data symbol is precoded as the vector~$\bm{a}^{(K-j)}_k\in \mathbb{C}^{M\times 1}$. The vector $\bm{a}^{(j)}_1$ is chosen to ZF the interference to the $K-j$ last RXs, i.e., such that 
\begin{equation}
\forall k=j+1,\ldots,K,~~\hat{\bm{h}}_k^{\He}\bm{a}^{(j)}_1=0.
\label{eq:KMAT_1}
\end{equation}
The remaining $K-j$ precoded data symbols are chosen such that $\forall k<i,(\bm{a}^{(j)}_k)^{\He}\bm{a}^{(j)}_i=0$\footnote{Note that this is solely done to ensure that all the precoded data symbols are linearly independent and span a subspace of dimension~$K-j+1$.}. Similarly, $\bm{a}^{(K-j)}_1$ is chosen such that 
\begin{equation}
\forall k=1,\ldots,j,~~\hat{\bm{h}}_k^{\He}\bm{a}^{(K-j)}_1=0
\label{eq:KMAT_2}
\end{equation}
and the remaining $j$ beamformers such that $\forall k<i, (\bm{a}^{(K-j)}_k)^{\He}\bm{a}^{(K-j)}_i=0$.

The power is allocated to these precoded data symbols as follows.
\begin{equation}
\begin{cases}
k=1,&\norm{\bm{a}^{(j)}_k}^2=\left[\frac{1}{2}\left(P-P^{\alpha}\right)-\frac{1}{2}\frac{K-j}{K-j+1}P^{1-\alpha}\right]^{+},\\
\forall k=2,\ldots,K-j+1&\norm{\bm{a}^{(j)}_k}^2=\frac{1}{2}\frac{1}{K-j+1}P^{1-\alpha}
\end{cases}
\label{eq:KMAT_3}
\end{equation}
and similarly
\begin{equation}
\begin{cases}
k=1,&\norm{\bm{a}^{(K-j)}_k}^2=\left[\frac{1}{2}\left(P-P^{\alpha}\right)-\frac{1}{2}\frac{j}{j+1}P^{1-\alpha}\right]^{+},\\
\forall k=2,\ldots,j+1&\norm{\bm{a}^{(K-j)}_k}^2=\frac{1}{2}\frac{1}{j+1}P^{1-\alpha}
\end{cases} 
\label{eq:KMAT_4}
\end{equation}
The reason for this particular power allocation will become clear in the decoding part of the scheme. Every data symbol is sent with the rate~$(1-\alpha)\log(P)$.  

\paragraph{The ZF Data Symbols} In addition to these data symbols, we will transmit at the same time via conventional ZF one data symbol~$s_j$ to RX~$j$ (i.e an order-$1$ data symbol) for every RX~$j$. Hence, the data symbol $s_j$ is precoded to obtain~$\bm{u}_j\in \mathbb{C}^{M\times 1}$ such that 
\begin{equation}
\forall k\neq j,\hat{\bm{h}}_k^{\He}\bm{u}_j=0.
\label{eq:KMAT_5}
\end{equation}
The power is allocated to verify that $\forall i, \E[\norm{\bm{u}_i}^2]=P^{\alpha}/K$ and each data symbol is sent with the rate $\alpha\log(P)$.

The received signal at RX~$k$ then reads as
\begin{equation}
\begin{cases}
k\leq j, &y_k=\underbrace{\bm{h}^{\He}_k\bm{a}^{(j)}_1}_{\sim P}+\underbrace{\sum_{i=2}^{K-j+1}\bm{h}^{\He}_k\bm{a}^{(j)}_i}_{\sim P^{1-\alpha}}+\underbrace{\sum_{i=1}^{j+1}\bm{h}^{\He}_k\bm{a}^{(K-j)}_i}_{\sim P^{1-\alpha}}+\underbrace{\sum_{i=1}^K\bm{h}^{\He}_k\bm{u}_i}_{\sim P^{\alpha}}+z_k\\
k\geq j+1, &y_k=\underbrace{\bm{h}^{\He}_k\bm{a}^{(K-j)}_1}_{\sim P}+\underbrace{\sum_{i=2}^{j+1}\bm{h}^{\He}_k\bm{a}^{(K-j)}_i}_{\sim P^{1-\alpha}}+\underbrace{\sum_{i=1}^{K-j+1}\bm{h}^{\He}_k\bm{a}^{(j)}_i}_{\sim P^{1-\alpha}}+\underbrace{\sum_{i=1}^K\bm{h}^{\He}_k\bm{u}_i}_{\sim P^{\alpha}}+z_k
\end{cases}
\label{eq:KMAT_6}
\end{equation}
Note that the interferences from $\bm{a}^{(j)}_1$ and $\bm{a}^{(K-j)}_1$ have been attenuated by $P^{-\alpha}$ following the ZF with respect to the imperfect channel estimates.
\item \emph{Creation of the $\ALTMAT$ Order-$j+1$ Data Symbols:} Considering the received signal scaling in $P^{\alpha}$ as noise and omitting the power scaling of the received signals, we have obtained the same received signals as in the $\ALTMAT$ scheme described in Section~\ref{se:ALTMAT}. Hence, the interference $\sum_{i}\bm{h}^{\He}_k\bm{a}^{(K-j)}_i$ for $k\leq j$ is needed to remove the interference at RX~$k$ but forms also a desired equation for the last $K-j$ users. Thus, it can be seen as an order-$(K-j+1)$ message. Similarly, the interference $\sum_{i}\bm{h}^{\He}_k\bm{a}^{(j)}_i$ for $k\geq j+1$ is needed by the first $j$ RXs and by RX~$k$, and is hence an order-$(j+1)$ message.

All the ``equations" which have to be retransmitted have a power scaling in $P^{1-\alpha}$. Hence, we can use the well known result that quantizing them with $(1-\alpha)\log(P)$~bits leads to a distorsion scaling in $P^0$\cite{cover2006}, which is negligible in terms of DoF. 

The data symbols of order-$j$ and order-$(K-j)$ taken as input have a rate of $(1-\alpha)\log(P)$ and this is also the case of the new messages created. As a consequence, the $\ALTMAT$ algorithm can proceed with the transmission of the quantized equations as the order-$(j+1)$ and order-$(K-j+1)$ messages for the next iteration of the $\ALTMAT$ scheme.
\item \emph{Successive decoding:} We now consider that the modified $\ALTMAT$ has reached its end. Let us first consider RX~$k$ for $k\leq j$. This RX has received $K-j$ equations relative to its order-$j$ symbols and was also able to remove the interference received. Hence, it has in total $K-j+1$ equations having each a SNR scaling in $P^{1-\alpha}$. Consequently, RX~$k$ can decode all the desired precoded data symbols~$\bm{a}^{(j)}_i$ for all $i$.
\item \emph{Successive decoding:} We now consider that the modified $\ALTMAT$ has reached its end. Let us first consider RX~$k$ for $k\leq j$. This RX has received $K-j$ equations relative to its order-$j$ symbols and was also able to remove the interference received. Hence, it has in total $K-j+1$ equations having each a SNR scaling in $P^{1-\alpha}$. Consequently, RX~$k$ can decode all the desired precoded data symbols~$\bm{a}^{(j)}_i$ for all $i$.

The data symbols of order-$j$ being decoded, they can be subtracted from the received signal. Since the interference have also been subtracted, the received signal at RX~$k$ reads then as
\begin{equation} 
y_k=\underbrace{\bm{h}^{\He}_k\bm{u}_k}_{\sim P^{\alpha}}+\underbrace{\sum_{i=1,i\neq k}^K\bm{h}^{\He}_k\bm{u}_i}_{\sim P^{0}}+z_k.
\label{eq:KMAT_7}
\end{equation}
The interference term in \eqref{eq:KMAT_7} is drawn in the noise due to the attenuation by~$P^{-\alpha}$ from the ZF precoding. As a consequence, the precoded symbol~$\bm{u}_k$ is received at RX~$k$ with a SNR scaling as~$P^{\alpha}$ and can be decoded.

The same analysis can be carried out for RX~$k$ with $k\geq j$.
\end{itemize}

\subsection{Degrees of Freedom Analysis}\label{se:KMAT:DoF}

 \begin{figure}[htp!] 
\centering
\includegraphics[width=1\columnwidth]{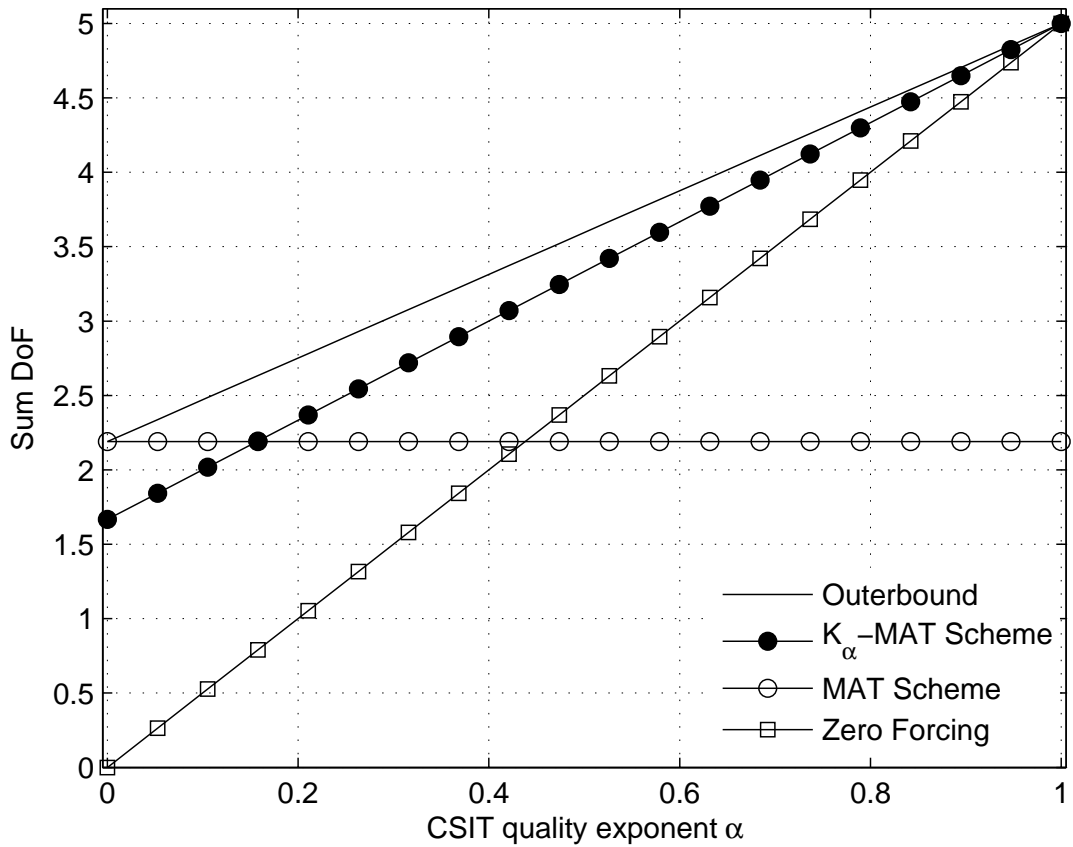}
\caption{Sum DoF for $K=5$~users in terms of the CSIT quality exponent~$\alpha$.}
\label{SumDoF_5Users}
\end{figure}
From the description of the algorithm, the DoF expression from Theorem~\ref{thm_KMAT} is easily derived as follows. The $\ALTMAT$ scheme has been used to transmit data symbol of rate~$(1-\alpha)\log(P)$ while at \emph{every time slot} of this scheme, one data symbol has been transmitted to every user via ZF with a rate equal to~$\alpha \log(P)$. Hence, the DoF given in Theorem~\ref{thm_KMAT} can be achieved.

In Fig.~\ref{SumDoF_5Users}, we represent the sum DoF achieved with the~$\KMAT$ scheme. Although the $\MAT$ scheme is optimal when $\alpha=0$ and the CSIT is completely outdated, the $\ALTMAT$ scheme becomes more efficient as the CSIT quality exponent increases. The $\KMAT$ scheme coincides with ZF when the CSIT is accurate enough ($\alpha=1$) and is otherwise more performing. Hence, it can be seen as a robust version of ZF with respect to the delay in the CSIT.

Furthermore, we show in Fig.~\ref{sum_DoF_KMAT_alpha05} the DoF achieved in terms of the number of users with the CSIT quality exponent $\alpha=0.5$. It can be seen that the $\KMAT$ scheme outperforms in that case both ZF and $\MAT$.
\begin{figure}[htp!] 
\centering
\includegraphics[width=1\columnwidth]{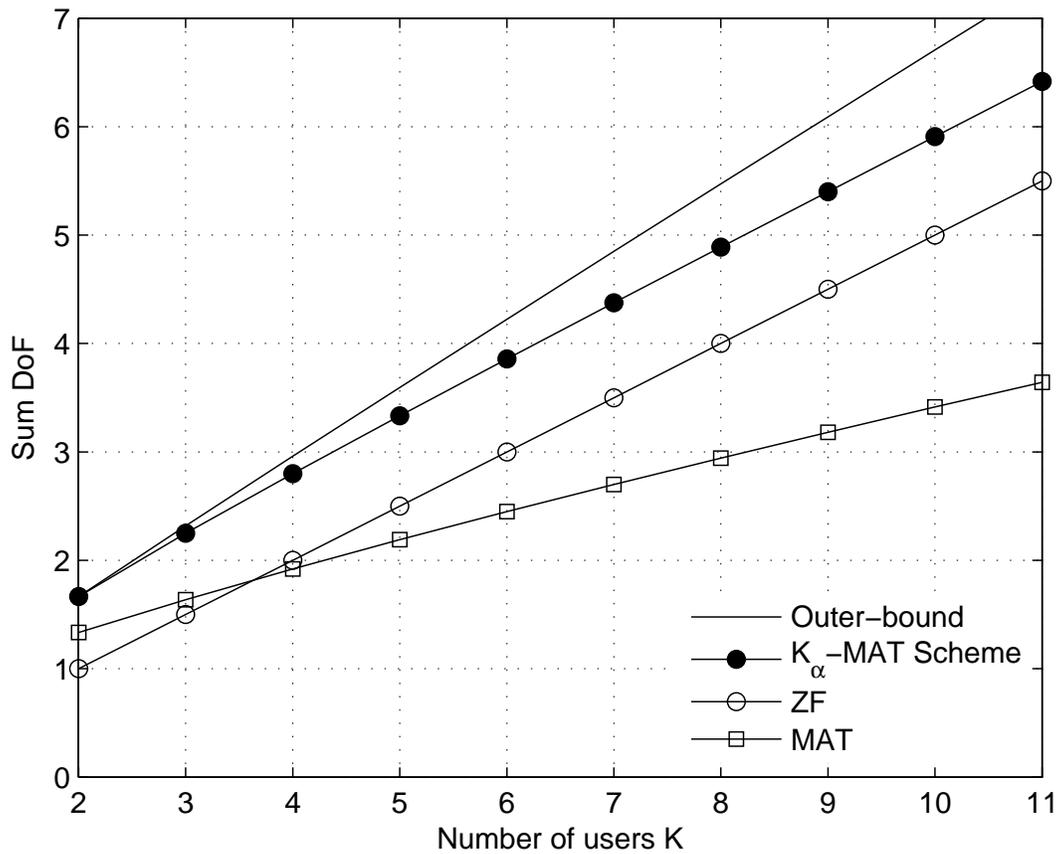}
\caption{Sum DoF in terms of the number of users~$K$ for the CSIT quality exponent~$\alpha=0.5$.}
\label{sum_DoF_KMAT_alpha05}
\end{figure}

\FloatBarrier 
\newpage
\section{Proof of the Outer Bound}\label{se:Outerbound}

To obtain the outer bound, we adopt a genie-aided upper bounding technique inspired from\cite{Vaze2011a,Yang2012}. We provide to RX~$i$ the side information of the RX~$j$'s message $W_j$ as well as the received signal $y_j(t'), \forall t'\leq t$ for $j = i+1,\cdots,K$. We consider that all the $K$~users are active (i.e., have a positive DoF) because the approach trivially extends by replacing~$K$ with any number~$p$ of active users such that~$1\leq p\leq K$. Recall that all the RXs have access after a given delay to the perfect CSI~${\mathbf{H}}(t)$ as well as the imperfect CSI $\hat{\mathbf{H}}(t)$. Since the decoding of the signal received at time $t$ is done solely once the RX has received the CSI relative to time $t$, it means that we can consider that the RXs have access to the CSI instantaneously. We further define for ease of notation $W_{[i:j]}\triangleq \{W_{i},W_{i+1},\cdots,W_{j}\}$, $\mathbf{Y}_{[i:j]}(t) \triangleq \{y_i(t), y_{i+1}(t),\cdots,y_j(t)\}$, $\mathbf{H}_{[i:j]}(t) \triangleq [\bm{h}_i(t) , \bm{h}_{i+1}(t) , \cdots,\bm{h}_j(t)]^\He$, where $j\ge i$, and $\mathbf{Y}_{[i:j]}^t \triangleq \{\mathbf{Y}_{[i:j]}(m)\}_{m=1}^t$.

From Fano's inequality, it follows for arbitrary~$\varepsilon_n>0$,
\begin{align}
  n(R_k-\varepsilon_n) &\leq I(W_k;W_{[k+1:K]},\mathbf{Y}_{[k:K]}^n | {\mathbf{H}}^n, \hat{\mathbf{H}}^n )\\
  &= I(W_k;\mathbf{Y}_{[k:K]}^n | W_{[k+1:K]}, {\mathbf{H}}^n, \hat{\mathbf{H}}^n )\\
  &= \sum_{t=1}^n I(W_k;\mathbf{Y}_{[k:K]}(t) | W_{[k+1:K]},\mathbf{Y}_{[k:K]}^{t-1},{\mathbf{H}}^n, \hat{\mathbf{H}}^n )\\
  &= \sum_{t=1}^n \left(h(\mathbf{Y}_{[k:K]}(t) | W_{[k+1:K]},\mathbf{Y}_{[k:K]}^{t-1},{\mathbf{H}}^t, \hat{\mathbf{H}}^t ) - h(\mathbf{Y}_{[k:K]}(t) | W_{[k:K]},\mathbf{Y}_{[k:K]}^{t-1}, {\mathbf{H}}^t, \hat{\mathbf{H}}^t ) \right)\\
  &= \sum_{t=1}^n \left(h(\mathbf{Y}_{[k:K]}(t) | \mathcal{U}_k(t),\mathbf{H}(t)) - h(\mathbf{Y}_{[k:K]}(t) | W_k, \mathcal{U}_k(t),\mathbf{H}(t)) \right)
\end{align}
where we have defined $\mathcal{U}_k(t) \triangleq \{W_{[k+1:K]},\mathbf{Y}_{[k:K]}^{t-1},{\mathbf{H}}^{t-1}, \hat{\mathbf{H}}^{t}\}$.
Thus, the weighted sum rate can be bounded for arbitrary nonzero natural number~$N_k,k=1,\ldots,K$ as
\begin{align}
  &\sum_{k=1}^K \frac{n(R_k-\varepsilon_n)}{N_k} \notag\\
   &\leq \sum_{t=1}^n \sum_{k=1}^K \frac{1}{N_k} h(\mathbf{Y}_{[k:K]}(t) | \mathcal{U}_k(t),\mathbf{H}(t)) - \sum_{t=1}^n \sum_{k=1}^K \frac{1}{N_k} h(\mathbf{Y}_{[k:K]}(t) | W_k, \mathcal{U}_k(t),\mathbf{H}(t))\\
  &= \sum_{t=1}^n \sum_{k=1}^{K-1} \left\{\frac{1}{N_k} h(\mathbf{Y}_{[k:K]}(t) | \mathcal{U}_k(t),\mathbf{H}(t)) - \frac{1}{N_{k+1}} h(\mathbf{Y}_{[k+1:K]}(t) | W_{k+1},\mathcal{U}_{k+1}(t),\mathbf{H}(t)) \right\}\notag\\
  &~~~~~~~~~~~~~~~~~~~~~~~~~~+ \frac{1}{N_K} h(y_{K}(t) | \mathcal{U}_K(t),\mathbf{H}(t)) - \frac{1}{N_1} h(\mathbf{Y}_{[1:K]}(t) | W_{1},\mathcal{U}_1(t),\mathbf{H}(t))\\
  &\leq\! \sum_{t=1}^n \!\sum_{k=1}^{K-1} \!\left\{\frac{1}{N_k} h(\mathbf{Y}_{[k:K]}(t) | \mathcal{U}_k(t),\mathbf{H}(t))\! - \!\frac{1}{N_{k+1}} h(\mathbf{Y}_{[k+1:K]}(t) |W_{k+1},\mathcal{U}_{k+1}(t),\mathbf{H}(t),\mathbf{Y}_k^{t-1}) \!\!\right\}\!\notag\\
  &~~~~~~~~~~~~~~~~~~~~~~~~~~+ n \log P+n \cdot O(1)\\
  &= \sum_{t=1}^n \sum_{k=1}^{K-1} \left\{\frac{1}{N_k} h(\mathbf{Y}_{[k:K]}(t) | \mathcal{U}_k(t),\mathbf{H}(t)) - \frac{1}{N_{k+1}} h(\mathbf{Y}_{[k+1:K]}(t) | \mathcal{U}_k(t),\mathbf{H}(t)) \right\}\notag\\
  &~~~~~~~~~~~~~~~~~~~~~~~~~~+ n \log P+n \cdot O(1)
\label{eq:proof_out_1}
\end{align} 
Let us focus on one of the differences of entropy in the summation. We can apply the same calculation as in the proof of the outerbound in \cite{Yang2012}. Firstly, we set~$\forall k, N_k=K-k+1$ to write 
\begin{align}
  &\frac{1}{K-k+1} h(\mathbf{Y}_{[k:K]}(t) | \mathcal{U}_k(t),\mathbf{H}(t)) - \frac{1}{K-k} h(\mathbf{Y}_{[k+1:K]}(t) | \mathcal{U}_k(t),\mathbf{H}(t))\notag\\
&\leq \max_{p(\mathcal{U}_k(t)),p(\xv(t)|\mathcal{U}_k(t))} \left(\frac{h(\mathbf{Y}_{[k:K]}(t) | \mathcal{U}_k(t),\mathbf{H}(t))}{K-k+1}  - \frac{h(\mathbf{Y}_{[k+1:K]}(t) | \mathcal{U}_k(t),\mathbf{H}(t))}{K-k} \right)\label{eq:proof_out_3_1}\\
&\leq \max_{\mathcal{U}_k(t))}\E_{\mathcal{U}_k(t)}\max_{p(\xv(t)|\mathcal{U}_k(t))} \left(\frac{h(\mathbf{Y}_{[k:K]}(t) | \mathcal{U}_k(t),\mathbf{H}(t))}{K-k+1}  - \frac{h(\mathbf{Y}_{[k+1:K]}(t) | \mathcal{U}_k(t),\mathbf{H}(t))}{K-k} \right)\label{eq:proof_out_3_2}\\
&= \max_{\mathcal{U}_k(t))}\E_{\mathcal{U}_k(t)}\max_{p(\xv(t)|\mathcal{U}_k(t))}\E_{\mathbf{H}(t)|\mathcal{U}_k(t)} \left(\frac{h(\mathbf{Y}_{[k:K]}(t) | \mathcal{U}_k(t),\mathbf{H}(t)) }{K-k+1} - \frac{h(\mathbf{Y}_{[k+1:K]}(t) | \mathcal{U}_k(t),\mathbf{H}(t))}{K-k} \right)\label{eq:proof_out_3_2}\\
&= \max_{\mathcal{U}_k(t))}\E_{\mathcal{U}_k(t)}\max_{p(\xv(t)|\mathcal{U}_k(t))}\E_{\mathbf{H}(t)|\hat{\mathbf{H}}(t)} \left(\frac{h( \mathbf{H}_{[k:K]}(t) \xv(t)+\bm{z}_{[k:K]}(t) | \mathcal{U}_k(t)) }{K-k+1} \right.\notag\\
	&\left.~~~~~~~~~~~~~~~~~~~~~~~~~~~~~~~~~~~~~~~~~~~~~~~~~~~ -\frac{h( \mathbf{H}_{[k+1:K]}(t) \xv(t)+\bm{z}_{[k+1:K]}(t) | \mathcal{U}_k(t))}{K-k} \right)\label{eq:proof_out_3_3}\\
&= \max_{\mathcal{U}_k(t))}\E_{\mathcal{U}_k(t)}\max_{\substack{\mathbf{C}\succeq 0\\\tr(\mathbf{C})\leq P}}\max_{\substack{p(\xv(t)|\mathcal{U}_k(t))\\\mathrm{cov}(\xv(t)|\mathcal{U}_k(t))\preceq \mathbf{C}}}\E_{\mathbf{H}(t)|\hat{\mathbf{H}}(t)} \left(\frac{h( \mathbf{H}_{[k:K]}(t) \xv(t)+\bm{z}_{[k:K]}(t) | \mathcal{U}_k(t)) }{K-k+1} \right.\notag\\
	&\left.~~~~~~~~~~~~~~~~~~~~~~~~~~~~~~~~~~~~~~~~~~~~~~~~~~~ - \frac{ h( \mathbf{H}_{[k+1:K]}(t) \xv(t)+\bm{z}_{[k+1:K]}(t) | \mathcal{U}_k(t))}{K-k}\right)\label{eq:proof_out_3_4}
\end{align}
where \eqref{eq:proof_out_3_2} is obtained because maximizing inside the expectation leads to an upper bound and \eqref{eq:proof_out_3_4} follows from splitting the constraint on the distribution in two constraints.


We can now apply the Extremal Inequality from \cite[Theorem~$8$]{Liu2007}. This is possible because $\xv(t)$ is independent of $\mathbf{H}(t)$ (and of the noise) conditioned on the channel estimate~$\hat{\mathbf{H}}(t)$. The multiplication by the channel matrices (not present in the original theorem) is taking care of by inverting the channel after having regularized it, and letting then the regularization tend to zero \cite{Weingarten2006}.

It follows from that result that the optimal vector~$\xv(t)$ is Gaussian distributed. We define then the covariance matrix~$\mathbf{K}_{\xv}(t) \triangleq \E \{\xv(t)\xv^\He(t) | \mathcal{U}_k(t)\}$ and write
\begin{align}
  &\frac{1}{K-k+1} h(\mathbf{Y}_{[k:K]}(t) | \mathcal{U}_k(t),\mathbf{H}(t)) - \frac{1}{K-k} h(\mathbf{Y}_{[k+1:K]}(t) | \mathcal{U}_k(t),\mathbf{H}(t))\notag\\		
&\leq \max_{\mathcal{U}_k(t))}\E_{\mathcal{U}_k(t)}\max_{\substack{\mathbf{C}\succeq 0\\\tr(\mathbf{C})\leq P}}\max_{\mathbf{K}_{\xv}(t)\preceq \mathbf{C}}\E_{\mathbf{H}(t)|\hat{\mathbf{H}}(t)} \left(\frac{1}{K\!-\!k\!+\!1}\!\log \det (\mathbf{I}_{K-k+1} \!+ \!\mathbf{H}_{[k:K]}(t) \mathbf{K}_{\xv}(t) \mathbf{H}_{[k:K]}^{\He}(t) ) \!\right. \!\notag\\
	&\left.~~~~~~~~~~~~~~~~~~~~~~~~~~~~~~ - \frac{1}{K-k} \log \det (\mathbf{I}_{K-k}+\mathbf{H}_{[k+1:K]}(t) \mathbf{K}_{\xv}(t) \mathbf{H}_{[k+1:K]}^\He(t))\right) \label{eq:proof_out_4_1}\\
	&= \max_{\mathcal{U}_k(t))}\E_{\mathcal{U}_k(t)}\max_{\substack{\mathbf{C}\succeq 0\\\tr(\mathbf{C})\leq P}}\E_{\mathbf{H}(t)|\hat{\mathbf{H}}(t)} \left(\frac{1}{K\!-\!k\!+\!1}\!\log \det (\mathbf{I}_{K-k+1} \!+ \!\mathbf{H}_{[k:K]}(t) \mathbf{K}^*(t) \mathbf{H}_{[k:K]}^{\He}(t) ) \!\right. \!\notag\\
	&\left.~~~~~~~~~~~~~~~~~~~~~~~~~~~~~~ - \frac{1}{K-k} \log \det (\mathbf{I}_{K-k}+\mathbf{H}_{[k+1:K]}(t) \mathbf{K}^*(t) \mathbf{H}_{[k+1:K]}^\He(t))\right) \label{eq:proof_out_4_2}\\
		&\leq\max_{\mathcal{U}_k(t))}\E_{\mathcal{U}_k(t)}\max_{\substack{\mathbf{C}\succeq 0\\\tr(\mathbf{C})\leq P}}\E_{\mathbf{H}(t)|\hat{\mathbf{H}}(t)} \left(\frac{1}{K\!-\!k\!+\!1}\!\log \det (\mathbf{I}_{K-k+1} \!+ \!\mathbf{H}_{[k:K]}(t) \mathbf{C}(t) \mathbf{H}_{[k:K]}^{\He}(t) ) \!\right. \!\notag\\
	&\left.~~~~~~~~~~~~~~~~~~~~~~~~~~~~~~ - \frac{1}{K-k} \log \det (\mathbf{I}_{K-k}+\mathbf{H}_{[k+1:K]}(t) \mathbf{C}(t) \mathbf{H}_{[k+1:K]}^\He(t))\right) \label{eq:proof_out_4_3}\\
   &\stackrel{a}{\leq} \frac{1}{K-k+1} \alpha \log P + O(1)\label{eq:proof_out_4_4}
\end{align}
where we have defined $\mathbf{K}^*$ as the covariance matrix solution of the inner maximization in \eqref{eq:proof_out_4_1}. Inequality $a$ is a consequence of the following lemma which is proven in Appendix~\ref{app:lemma_out}:
\begin{lemma} 
Let us consider two $N_k \times M$ ($k=1,2$) random matrices $\mathbf{H}_k = \hat{\mathbf{H}}_k+\tilde{\mathbf{H}}_k$, where $\tilde{\mathbf{H}}_k$ has its entries distributed as i.i.d. $\CN(0,\sigma^2)$ and independent of $\hat{\mathbf{H}}_k$. Given any $\mathbf{K} \succeq 0$ with eigenvalues $\lambda_1 \geq \cdots \geq \lambda_{M} \geq 0$, and $M \geq N_1 \geq N_2$, if $\sigma^2$ tends to zero, then 
\begin{align}
 &\frac{1}{N_1}\E_{\tilde{\mathbf{H}}_1} \log \det (\Id_{N_1}+\mathbf{H}_1 \mathbf{K} \mathbf{H}_1^\He) - \frac{1}{N_2} \E_{\tilde{\mathbf{H}}_2} \log \det (\Id_{N_2}+\mathbf{H}_2 \mathbf{K} \mathbf{H}_2^\He) \leq - \frac{N_1-N_2}{N_1} \log (\sigma^2) + O(1).
\label{eq:Lemma_Out}
\end{align}
\label{Lemma_Out}
\end{lemma}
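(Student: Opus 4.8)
The plan is to reduce the difference of the two normalised log-determinants to a sum of ``innovation'' terms, one per extra row by which $\mathbf{H}_1$ exceeds $\mathbf{H}_2$, and to show that each such term contributes at most $-\log(\sigma^2)+O(1)$. I would exploit the structure that this lemma is applied to, namely that $\mathbf{H}_2$ consists of the last $N_2$ rows of $\mathbf{H}_1$ (as $\mathbf{H}_{[k+1:K]}\subset\mathbf{H}_{[k:K]}$). Since the two expectations depend only on the marginal laws of $\mathbf{H}_1$ and of $\mathbf{H}_2$, I am free to couple them in this nested way, and I will also use that $\mathbf{K}$ is a function of the channel estimates only, hence independent of the fresh estimation error.

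First I would apply the determinant chain rule (Schur complement). Writing $\mathbf{G}^{(m)}=\mathbf{H}^{(m)}\mathbf{K}^{1/2}$ for the $\mathbf{K}$-weighted bottom $m$ rows and peeling the extra rows one at a time gives
\[
\log\det(\Id_{N_1}+\mathbf{H}_1\mathbf{K}\mathbf{H}_1^\He)=\log\det(\Id_{N_2}+\mathbf{H}_2\mathbf{K}\mathbf{H}_2^\He)+\sum_{m=N_2}^{N_1-1}\log(1+I_m),
\]
where $I_m=\bm{g}_m^\He(\Id_M+\mathbf{G}^{(m)\He}\mathbf{G}^{(m)})^{-1}\bm{g}_m$ is the MMSE residual of the newly added row, with $\bm{g}_m=\mathbf{K}^{1/2\He}\bm{h}_m$. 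Forming the weighted difference and using $\tfrac{1}{N_1}\le\tfrac{1}{N_2}$ together with $\log\det(\Id+\cdots)\ge 0$, the common $\log\det(\Id_{N_2}+\mathbf{H}_2\mathbf{K}\mathbf{H}_2^\He)$ term enters with a non-positive coefficient and may be dropped, leaving
\[
\frac{1}{N_1}\E_{\tilde{\mathbf{H}}_1}\log\det(\Id_{N_1}+\mathbf{H}_1\mathbf{K}\mathbf{H}_1^\He)-\frac{1}{N_2}\E_{\tilde{\mathbf{H}}_2}\log\det(\Id_{N_2}+\mathbf{H}_2\mathbf{K}\mathbf{H}_2^\He)\le\frac{1}{N_1}\sum_{m=N_2}^{N_1-1}\E\log(1+I_m).
\]

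The heart of the proof is then the single-row bound $\E\log(1+I_m)\le-\log(\sigma^2)+O(1)$. The mechanism is a noise floor: conditioning on the estimates, the Gram matrix satisfies $\E[\mathbf{G}^{(m)\He}\mathbf{G}^{(m)}]=\mathbf{K}^{1/2\He}\hat{\mathbf{H}}^{(m)\He}\hat{\mathbf{H}}^{(m)}\mathbf{K}^{1/2}+\sigma^2 m\,\mathbf{K}^{1/2\He}\mathbf{K}^{1/2}$, so the independent estimation error injects an energy of order $\sigma^2\lambda_j$ into \emph{every} eigendirection $\bm{w}_j$ of $\mathbf{K}$. Hence the regularised projector $(\Id_M+\mathbf{G}^{(m)\He}\mathbf{G}^{(m)})^{-1}$ attenuates each direction by a factor at most $1/(\sigma^2\lambda_j)$, which caps the residual at $I_m\lesssim\sigma^{-2}\,\|\bm{h}_m\|^2=O(\sigma^{-2})$ on a typical noise realisation, giving $\log(1+I_m)\approx-\log(\sigma^2)+O(1)$. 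This is the generalisation to $K$ users of the two-user computation of~\cite{Yang2012}.

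The step I expect to be the main obstacle is making this last bound rigorous in the expectation-of-log sense. Because $I_m$ is a ratio of random quadratic forms, the attenuation argument only holds on a typical event; the ``signal'' part $\hat{\bm{g}}_m^\He(\Id_M+\mathbf{G}^{(m)\He}\mathbf{G}^{(m)})^{-1}\hat{\bm{g}}_m$ can be as large as $\|\hat{\bm{g}}_m\|^2=\Theta(P)$ on the rare event that the noise fails to excite the relevant direction. One must therefore show that such events carry negligible probability and contribute only $O(1)$ to $\E\log(1+I_m)$; here it is important to bound the expectation of the logarithm directly rather than via Jensen, since the latter would produce a spurious $\log\log P$ factor. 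Summing the $N_1-N_2$ identical bounds and dividing by $N_1$ then yields $-\frac{N_1-N_2}{N_1}\log(\sigma^2)+O(1)$, as claimed. I would finally remark that the nesting of $\mathbf{H}_2$ in $\mathbf{H}_1$ is genuinely used: for unrelated estimates $\hat{\mathbf{H}}_1,\hat{\mathbf{H}}_2$ the adversary could place $\mathbf{K}$ in a subspace seen fully by $\hat{\mathbf{H}}_1$ but nulled by $\hat{\mathbf{H}}_2$, and the inequality would fail, so the proof relies on the row-subset structure that the outer bound supplies.
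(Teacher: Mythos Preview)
Your chain-rule decomposition and the inequality
\[
\text{LHS}\ \le\ \frac{1}{N_1}\sum_{m=N_2}^{N_1-1}\E\log(1+I_m)
\]
are correct, but the step after that fails: the bound $\E\log(1+I_m)\le -\log\sigma^2+O(1)$ is simply not true uniformly in $\mathbf{K}$. Your noise-floor heuristic implicitly treats $\mathbf{G}^{(m)\He}\mathbf{G}^{(m)}$ as its expectation, but the actual Gram matrix has rank at most $m<M$; the estimation error only rotates the $m$-dimensional row space of $\mathbf{G}^{(m)}$ slightly, it does not fill the remaining directions. Concretely, take $M=N_1=2$, $N_2=1$, $\mathbf{K}=P\,\Id_2$, the existing row $\hat{\bm{h}}=\bm{e}_1^{\He}$ and the new row $\hat{\bm{h}}_{\mathrm{new}}=\bm{e}_2^{\He}$. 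Then $\bm{g}_{\mathrm{new}}\approx\sqrt{P}\,\bm{e}_2$ lies essentially in the null space of $\mathbf{G}^{(1)\He}\mathbf{G}^{(1)}$, the projector there equals $1$, and one gets $I_m\approx P$, hence $\log(1+I_m)\approx\log P$. With $\sigma^2=P^{-\alpha}$ and $\alpha<1$ this exceeds $-\log\sigma^2=\alpha\log P$, so your single-row bound is violated. The lemma itself still holds in this example, precisely because the term $\bigl(\tfrac{1}{N_1}-\tfrac{1}{N_2}\bigr)\E\log\det(\Id_{N_2}+\mathbf{H}_2\mathbf{K}\mathbf{H}_2^{\He})\approx -\tfrac12\log P$ that you discarded cancels the $+\tfrac12\log P$ coming from the innovation. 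Throwing it away is the fatal step.

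The paper does not use a row-peeling argument at all. It bounds the two log-determinants separately: the first term from above by $\tfrac{1}{N_1}\sum_{i=1}^{N_1}\log(1+c\lambda_i)$ via eigenvalue majorisation, and the second term from below by invoking an auxiliary inequality from~\cite{Yi2012} which gives, for the restriction to the $N_1$ largest eigendirections,
\[
\E\log\det(\Id_{N_2}+\mathbf{H}_2\mathbf{K}\mathbf{H}_2^{\He})\ \ge\ \frac{N_2}{N_1}\log\det(\Id_{N_1}+\bm{\Lambda}')+\frac{N_2(N_1-N_2)}{N_1}\log\sigma^2+O(1).
\]
After dividing by $N_2$ this lower bound carries a $\tfrac{1}{N_1}\log\det(\Id_{N_1}+\bm{\Lambda}')$ that exactly matches the upper bound, so the $\lambda_i$-dependence cancels and only $-\tfrac{N_1-N_2}{N_1}\log\sigma^2$ survives. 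That cancellation is the whole point, and it is what your approach loses when it drops the common $\log\det$ term.
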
 
Using \eqref{eq:proof_out_4_4} in \eqref{eq:proof_out_1} with $N_k=K-k+1$, it follows that
\begin{align}
\sum_{k=1}^K \frac{n(R_k-\varepsilon_n)}{K-k+1}\leq \sum_{t=1}^n \sum_{k=1}^{K-1}\frac{1}{K-k+1} \alpha \log P+ n \log P+n \cdot O(1).
\label{eq:proof_out_5} 
\end{align}
Dividing by~$n\log(P)$, considering arbitrarily long codewords, and letting $P$ tend to infinity gives
\begin{align}
  \sum_{k=1}^K \frac{d_k}{K-k+1} &\leq 1+\alpha\sum_{k=1}^{K-1}\frac{1}{K-k+1}\\
	&=1+ \alpha\sum_{k=2}^{K} \frac{1}{k}.
	\label{eq:proof_out_6} 
\end{align}

By permutation of the users and variation of the number of active users, all the outer bounds can be obtained. This concludes the proof.

\section*{Acknowledgment}
Helpful discussions with Sheng Yang (Supelec) and Mari Kobayashi (Supelec) are gratefully acknowledged.

\section{Conclusion}\label{se:Conclusion}
In this work, considering a $K$-user MISO BC, a new transmission scheme has been developed to exploit at the same time the principle behind the $\MAT$ alignment based on delayed CSIT and ZF of the interference. The novel $\KMAT$ scheme is more robust than ZF to the channel estimates being received with some delay and coincides with ZF when the CSIT received is accurate enough. Furthermore, over a wide range of values taken by the CSIT quality exponent~$\alpha$, the $\KMAT$ scheme outperforms both $\MAT$ and ZF. This makes such approach a strong candidate to improve the robusteness to CSI feedback delays of the transmission scheme. In addition, an outer-bound DoF region has been derived. How to reduce the gap between the outer and the inner bound is an interesting open problem for futur research. Furthermore, the $\MAT$ alignment scheme from Maddah-Ali and Tse is very recent and is expected to have applications in many more settings and to have a strong potential for further improvements.


\clearpage
\appendices
\section{Proof of Lemma~\ref{lemma_ALTMAT}}\label{app:Lemma_ALTMAT}
\begin{proof}
Let us recall first for the sake of clarity the DoF expression for the order-$j$ phase
\begin{equation}
\frac{j+1}{\DoF_{K-j}}+\frac{K-j+1}{\DoF_{j}}=\frac{K-j}{\DoF_{j+1}}+\frac{j}{\DoF_{K-j+1}}+1.
\label{eq:ALTMAT_proof_1}
\end{equation}
Rewriting this expression for the order-$j+1$ phase gives
\begin{equation}
\frac{j+2}{\DoF_{K-j-1}}+\frac{K-j}{\DoF_{j+1}}=\frac{K-j-1}{\DoF_{j+2}}+\frac{j+1}{\DoF_{K-j}}+1.
\label{eq:ALTMAT_proof_2}
\end{equation}
and for the order-$j-1$ phase 
\begin{equation}
\frac{j}{\DoF_{K-j+1}}+\frac{K-j+2}{\DoF_{j-1}}=\frac{K-j+1}{\DoF_{j}}+\frac{j-1}{\DoF_{K-j+2}}+1.
\label{eq:ALTMAT_proof_3}
\end{equation}

Adding \eqref{eq:ALTMAT_proof_1} and \eqref{eq:ALTMAT_proof_2}, the first term of the Left-Hand Side (LHS) of \eqref{eq:ALTMAT_proof_1} simplifies with the second term of the right-hand side (RHS) in \eqref{eq:ALTMAT_proof_2} while the first term of the RHS of \eqref{eq:ALTMAT_proof_1} simplifies with the second term of the LHS of \eqref{eq:ALTMAT_proof_2}. Similarly, adding \eqref{eq:ALTMAT_proof_1} and \eqref{eq:ALTMAT_proof_3}, leads to the simplification of the second term of the LHS and the second term of the RHS in \eqref{eq:ALTMAT_proof_1} with their counterpart in \eqref{eq:ALTMAT_proof_3}. 

As a consequence, adding the equations obtained from phase~$1$ to phase~$k$ yields
\begin{equation}
\frac{K}{\DoF_{1}}+\frac{k+1}{\DoF_{K-k}}=\frac{K-k}{\DoF_{k+1}}+\frac{1}{\DoF_{K}}+k.
\label{eq:ALTMAT_proof_4}
\end{equation}
We now differentiate between the two cases $K$ even and $K$ odd.
\begin{itemize}
\item If $K$ is odd, then choosing $k=(K-1)/2$ in \eqref{eq:ALTMAT_proof_4} gives
\begin{equation}
\frac{K}{\DoF_{1}}=\frac{1}{\DoF_{K}}+\frac{K-1}{2}.
\label{eq:ALTMAT_proof_5}
\end{equation}
because it holds in that case that~$K-k=k+1$ such that two terms simplify in \eqref{eq:ALTMAT_proof_4}. The proof concludes by using that~$\DoF_{K}(K,K)=1$.
\item If $K$ is even, writing \eqref{eq:ALTMAT_proof_4} with $k=K/2-1$ gives
\begin{equation}
\frac{K}{\DoF_{1}}+\frac{\frac{K}{2}}{\DoF_{\frac{K}{2}+1}}=\frac{\frac{K}{2}+1}{\DoF_{\frac{K}{2}}}+\frac{1}{\DoF_{K}}+\frac{K}{2}-1.
\label{eq:ALTMAT_proof_6}
\end{equation}
We proceed by writing the DoF expression \eqref{eq:ALTMAT_proof_1} for the order-$K/2$ phase which gives 
\begin{equation}
\frac{K+2}{\DoF_{\frac{K}{2}}}=\frac{K}{\DoF_{\frac{K}{2}+1}}+1.
\label{eq:ALTMAT_proof_7}
\end{equation}
Adding one half of \eqref{eq:ALTMAT_proof_7} to \eqref{eq:ALTMAT_proof_6} gives \eqref{eq:ALTMAT_proof_5}.
\end{itemize} 
The result follows directly from \eqref{eq:ALTMAT_proof_5} since the expression relative to the symbol of order-$j$ for $j\neq 1,K$ have been simplified.
\end{proof}

\section{Proof of Lemma~\ref{Lemma_Out}}\label{app:lemma_out}
We will proceed by bounding first separately each term of \eqref{eq:Lemma_Out}.
\begin{itemize}
\item Let us consider first the second term which we should lower bound. Recall that we consider two $N_k \times M$ ($k=1,2$) random matrices $\mathbf{H}_k = \hat{\mathbf{H}}_k+\tilde{\mathbf{H}}_k$, where $\tilde{\mathbf{H}}_k$ has its entries distributed as i.i.d. $\CN(0,\sigma^2)$ and independent of $\hat{\mathbf{H}}_k$ and a matrix $\mathbf{K} \succeq 0$ of size~$M\times M$ with eigenvalues $\lambda_1 \geq \cdots \geq \lambda_{M} \geq 0$ such that $M \geq N_1 \geq N_2$. We also define the EigenValue Decomposition (EVD) of the positive semi-definite matrix~$\mathbf{K}$ such that~$\mathbf{K}=\mathbf{V}\bm{\Lambda}\mathbf{V}^{\He}$ with~$\mathbf{V}$ a unitary matrix of size~$M\times M$ and $\mathbf{\Lambda}=\diag(\lambda_1,\lambda_2,\ldots,\lambda_K)$ such that~$\lambda_1\geq \lambda_2\geq \ldots,\geq \lambda_K$. We then write
\begin{align}
&	\E_{\tilde{\mathbf{H}}_2} \log \det (\Id_{N_2}+\mathbf{H}_2 \mathbf{K} \mathbf{H}_2^\He)\notag\\
&=\E_{\tilde{\mathbf{H}}_2}\log \det (\Id_{N_2}+\mathbf{H}_2\mathbf{K} \mathbf{H}_2^\He)+\E_{\tilde{\mathbf{H}}_2} \log \det (\Id_{N_2}+\mathbf{H}_2\mathbf{H}_2^\He)- \E_{\tilde{\mathbf{H}}_2} \log \det (\Id_{N_2}+\mathbf{H}_2 \mathbf{H}_2^\He)\label{eq:App_Out_1_1}\\
&\geq \E_{\tilde{\mathbf{H}}_2} \log \det (\Id_{N_2}+\mathbf{H}_2\mathbf{H}_2^\He+(\Id_{N_2}+\mathbf{H}_2\mathbf{H}_2^\He)^{\frac{1}{2}}\mathbf{H}_2\mathbf{K}\mathbf{H}_2^\He\left((\Id_{N_2}+\mathbf{H}_2\mathbf{H}_2^\He)^{\frac{1}{2}}\right)^{\He})\notag\\
&~~~~~~~~~~~~~~~~~~~~~~~~~~~~~~~~~~~~~~~~~~~~~~~~~~~~~~~~~~~~~~~- N_2\E_{\tilde{\mathbf{H}}_2} \log \det (1+\|\mathbf{H}_2\|^2_{\mathrm{F}})\label{eq:App_Out_1_2}\\
&\geq\E_{\tilde{\mathbf{H}}_2} \log \det (\Id_{N_2}+\mathbf{H}_2(\Id_M+\mathbf{K})\mathbf{H}_2^\He)- N_2\log \det (1+\|\hat{\mathbf{H}}_2\|^2_{\mathrm{F}}+MN_2\sigma^2)\label{eq:App_Out_1_3}\\
&=\E_{\tilde{\mathbf{H}}_2} \log \det (\Id_{N_2}+\mathbf{H}_2(\Id_M+\mathbf{K})\mathbf{H}_2^\He)+O(1)\label{eq:App_Out_1_4}
\end{align} 
where \eqref{eq:App_Out_1_3} has been obtained by applying Jensen's inequality. We define $\bm{\Lambda}'=\diag(\lambda_{1},\lambda_{2},\ldots,\lambda_{N_1})$ as the matrix containing the $N_1$ largest eigenvalues from~$\mathbf{\Lambda}$ and we proceed from \eqref{eq:App_Out_1_4} as
	\begin{align}  
&\E_{\tilde{\mathbf{H}}_2} \log \det (\Id_{N_2}+\mathbf{H}_2 \mathbf{K} \mathbf{H}_2^\He)\notag\\
&~~~\geq\E_{\tilde{\mathbf{H}}_2} \log \det (\Id_{N_2}+ \mathbf{H}_2 \mathbf{V} (\Id_M+\mathbf{\Lambda}) \mathbf{V}^{\He}\mathbf{H}_2^{\He} )+O(1)\label{eq:App_Out_2_1}\\  
&~~~\stackrel{a}{\ge}\E_{\tilde{\mathbf{H}}_2} \log \det (\Id_{N_2}+ \mathbf{H}_2 \mathbf{V}' (\Id_{N_1}+\mathbf{\Lambda}') \mathbf{V}'^{\He}\mathbf{H}_2^{\He} )+O(1)\label{eq:App_Out_2_2}\\  
&~~~=\E_{\tilde{\mathbf{H}}_2} \log \det (\Id_{N_2}+ \mathbf{\Phi}' (\Id_{N_1}+\mathbf{\Lambda}') \mathbf{\Phi}'^\He )+O(1)\label{eq:App_Out_2_3}\\
&~~~\stackrel{b}{\ge} \frac{N_2}{N_1} \log \det (\Id_{N_1}+\mathbf{\Lambda}') +\frac{N_2(N_1-N_2)}{N_1} \log(\sigma^2)+O(1)\label{eq:App_Out_2_4}
 \end{align}
where we have defined ${\bm{\Phi}}'\triangleq\mathbf{H}_2 \mathbf{V}'\in \mathbb{C}^{{N_2} \times {N_1}}$ with $\mathbf{V}'$ containing the $N_1$ largest eingenvectors, i.e., such that
\begin{equation}
\mathbf{K}=\mathbf{V}'\mathbf{\Lambda}'(\mathbf{V}')^{\He}+(\mathbf{V}-\mathbf{V}')(\mathbf{\Lambda}-\mathbf{\Lambda}')(\mathbf{V}-\mathbf{V}')^{\He}.
\label{eq:App_Out_3}
\end{equation}
 Inequality~$a$ follows from the fact that~$\det(\I+\mathbf{X})\geq\det(\I+\mathbf{Y})$ if $\mathbf{X}\succeq \mathbf{Y}$. Inequality~$b$ is verified because the Gaussian distribution remains invariant by multiplication with a deterministic rotation. Hence, ${\bm{\Phi}}'$ can be written as~$\hat{\bm{\Phi}}'+\tilde{\bm{\Phi}}'$ with the elements of~$\tilde{\bm{\Phi}}$ distributed as the elements of~$\tilde{\mathbf{H}}_2$.

As a consequence, the following lemma presented in~\cite{Yi2012} (although in a different form) can be applied to obtain inequality~$b$.
\begin{lemma}\label{lemma:caseB}
Given a random matrix $\mathbf{H} = \hat{\mathbf{H}}+\tilde{\mathbf{H}} \in \mathbb{C}^{n \times m}$ $(n \leq m \leq 2 n)$, where $\tilde{\mathbf{H}}$ is independent of $\hat{\mathbf{H}}$ and has its entries distributed as i.i.d. $\CN(0,\sigma^2)$, and any $\mathbf{K} \succeq 0$ with eigenvalues $\bm{\Lambda}\triangleq \diag([\lambda_1,\lambda_2,\ldots,\lambda_{m}])$, with $\lambda_1 \ge \lambda_2\ge \cdots \ge \lambda_{m} \ge 0$, it holds that
\begin{align}
\E_{\tilde{\mathbf{H}}} \log \det (\Id_n+\mathbf{H} \mathbf{K} \mathbf{H}^\He) \ge \frac{n}{m} \log \det (\bm{\Lambda}) +\frac{n(m-n)}{m} \log(\sigma^2)  + O(1).
\end{align} 
\end{lemma}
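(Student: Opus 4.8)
The plan is to prove the bound for a diagonal $\mathbf{K}$ and then read off the general case. First I would diagonalize: writing $\mathbf{K}=\mathbf{V}\bm{\Lambda}\mathbf{V}^\He$, the product $\mathbf{H}\mathbf{K}\mathbf{H}^\He$ equals $(\mathbf{H}\mathbf{V})\bm{\Lambda}(\mathbf{H}\mathbf{V})^\He$, and since right-multiplication by the unitary $\mathbf{V}$ leaves the law of $\tilde{\mathbf{H}}$ (i.i.d. $\CN(0,\sigma^2)$) invariant while preserving independence from $\hat{\mathbf{H}}\mathbf{V}$, I may assume $\mathbf{K}=\bm{\Lambda}=\diag(\lambda_1,\dots,\lambda_m)$ with $\lambda_1\ge\cdots\ge\lambda_m$. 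This is exactly the rotation already used to pass to $\bm{\Phi}'$ in the proof of Lemma~\ref{Lemma_Out}. Throughout I take $\hat{\mathbf{H}}$ of full row rank $n$, as guaranteed by the genericity assumption on the channel in the system model; this is the regime in which the lemma is invoked, and it is genuinely needed, since for a rank-deficient mean the right-hand side can be violated once the $\lambda_i$ are allowed to scale like $\sigma^{-2}$.

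The core idea is to split the $m$ columns into $n$ ``signal'' columns and $r\triangleq m-n$ ``overflow'' columns, the inequality $r\le n$ (equivalently $m\le 2n$) being precisely what makes the overflow block have full column rank. I would partition $\mathbf{H}=[\mathbf{H}_1\ \mathbf{H}_2]$ and $\bm{\Lambda}=\diag(\bm{\Lambda}_1,\bm{\Lambda}_2)$ accordingly, set $\mathbf{M}\triangleq\Id_n+\mathbf{H}_1\bm{\Lambda}_1\mathbf{H}_1^\He$, and use the exact factorization $\log\det(\Id_n+\mathbf{H}\bm{\Lambda}\mathbf{H}^\He)=\log\det(\mathbf{M})+\log\det(\Id_r+\bm{\Lambda}_2^{1/2}\mathbf{H}_2^\He\mathbf{M}^{-1}\mathbf{H}_2\bm{\Lambda}_2^{1/2})$. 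For the first (square) term I would use $\det(\mathbf{M})\ge|\det\mathbf{H}_1|^2\det\bm{\Lambda}_1$ together with the key auxiliary estimate that, for a $p\times p$ matrix $\hat{\mathbf{M}}+\sigma\mathbf{W}$ with $\mathbf{W}$ standard complex Gaussian, $\E\log|\det(\hat{\mathbf{M}}+\sigma\mathbf{W})|^2\ge\E\log|\det(\sigma\mathbf{W})|^2=p\log\sigma^2+O(1)$, the minimum over the mean being attained at $\hat{\mathbf{M}}=0$ by radial subharmonicity of $\log|\det\,\cdot\,|^2$; when $\hat{\mathbf{H}}_1$ is nonsingular this term is in fact $\log\det\bm{\Lambda}_1+O(1)$. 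For the second term I would lower-bound $\mathbf{M}^{-1}\succeq(1+\lambda_1\|\mathbf{H}_1\|_{\mathrm{F}}^2)^{-1}\Id_n$ and again invoke the zero-mean worst case for the full-column-rank $r\times r$ Gram $\mathbf{H}_2^\He\mathbf{M}^{-1}\mathbf{H}_2$, which supplies the $r=m-n$ factors of $\sigma^2$ together with the relevant $\log\lambda$ terms.

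Finally I would symmetrize and clean up. Each choice of which $n$ of the $m$ columns play the role of the signal block yields a valid lower bound; taking the best (equivalently, sorting so that the signal block carries the top $n$ eigenvalues) produces $\sum_{i=1}^n\log\lambda_i$ plus the overflow contribution. To convert the top-$n$ sum into the symmetric coefficient $n/m$ I would apply the majorization inequality $\frac{1}{n}\sum_{i=1}^n\log\lambda_i\ge\frac{1}{m}\sum_{i=1}^m\log\lambda_i$, valid because $\lambda_1\ge\cdots\ge\lambda_m$, giving $\sum_{i=1}^n\log\lambda_i\ge\frac{n}{m}\log\det\bm{\Lambda}$; the remaining slack, together with $\log\sigma^2\le 0$, absorbs the difference between the true overflow coefficient and the claimed $\frac{n(m-n)}{m}\log\sigma^2$. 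The main obstacle is the second term: one needs a lower bound on the Schur complement that is uniform both in the eigenvalues $\lambda_i$ (which scale with $P$) and in the mean $\hat{\mathbf{H}}$, and it is exactly here that $m\le 2n$ is indispensable, since it guarantees that $\mathbf{H}_2$ is tall and hence that each of the $m-n$ overflow directions contributes precisely one — rather than more — power of $\sigma^2$. Controlling this term uniformly, and checking that the resulting bound dominates $\frac{n}{m}\log\det\bm{\Lambda}+\frac{n(m-n)}{m}\log\sigma^2$ across every regime of the ratios $\lambda_{n+j}\sigma^2/\lambda_i$, is the crux of the argument.
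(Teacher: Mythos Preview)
The paper does not actually prove this lemma: it is quoted from \cite{Yi2012} and invoked as a black box inside the proof of Lemma~\ref{Lemma_Out}. So there is no ``paper's own proof'' to compare against, and your sketch has to stand on its own.

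Your building blocks are the right ones (unitary diagonalisation, the block/Schur split with $r=m-n\le n$, and the zero-mean worst case $\E\log|\det(\hat{\mathbf{M}}+\sigma\mathbf{W})|^2\ge p\log\sigma^2+O(1)$, which is indeed correct by the plurisubharmonicity argument you give). The gap is exactly where you flag ``the crux'': the two concrete bounds you write down are each too weak to close the argument uniformly in~$\mathbf{K}$. After the rotation, the square block is $\hat{\mathbf{H}}_1=(\hat{\mathbf{H}}\mathbf{V})_{[:,1:n]}$, which depends on~$\mathbf{K}$ through $\mathbf{V}$; so the ``nonsingular $\hat{\mathbf{H}}_1$'' estimate $\log\det\mathbf{M}\ge\log\det\bm{\Lambda}_1+O(1)$ carries a hidden $\log|\det\hat{\mathbf{H}}_1|^2$ that can be made arbitrarily negative by choosing~$\mathbf{V}$ adversarially, and full row rank of the original $\hat{\mathbf{H}}$ does not prevent this. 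If instead you use only the uniform zero-mean bound on the first block and your second-block bound $\mathbf{M}^{-1}\succeq(1+\lambda_1\|\mathbf{H}_1\|_{\Fro}^2)^{-1}\Id_n$, you pick up an unabsorbed $-r\log(1+\lambda_1\|\mathbf{H}_1\|_{\Fro}^2)$: for the simplest test case $\lambda_1=\cdots=\lambda_m=\lambda$, your combined bound is $\log\det\bm{\Lambda}+m\log\sigma^2-r\log(1+\lambda\|\mathbf{H}_1\|_{\Fro}^2)+O(1)$, which after subtracting the target $\frac{n}{m}\log\det\bm{\Lambda}+\frac{n r}{m}\log\sigma^2$ leaves $\frac{n^2+nr+r^2}{m}\log\sigma^2+O(1)\to-\infty$. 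So neither route, nor their naive combination via ``majorization plus $\log\sigma^2\le0$'', yields the lemma; a finer analysis of the Schur complement (one that does not throw away the interaction between $\mathbf{M}^{-1}$ and the eigenvalue profile) is genuinely required, and your proposal stops short of supplying it.
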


\item We now turn to deriving an upper bound for the first term in \eqref{eq:Lemma_Out}.
\begin{align}
 \frac{1}{N_1}\E_{\tilde{\mathbf{H}}_1} \log \det (\Id_{N_1}+\mathbf{H}_1 \mathbf{K} \mathbf{H}_1^\He)&\leq \frac{1}{N_1}\E_{\tilde{\mathbf{H}}_1} \sum_{i=1}^{N_1}\log (1+\|\mathbf{H}_1\|_{\mathrm{F}}^2 \lambda_i)\label{eq:App_Out_3_1}\\
&\leq \frac{1}{N_1} \sum_{i=1}^{N_1}\log (1+(\|\hat{\mathbf{H}}_1\|^2_{\mathrm{F}}+MN_1\sigma^2)\lambda_i)+O(1)\label{eq:App_Out_3_2}\\
&\leq \frac{1}{N_1} \sum_{i=1}^{N_1}\log (1+\left[\max(\|\hat{\mathbf{H}}_1\|^2_{\mathrm{F}}+MN_1\sigma^2),1)\right]\lambda_i)+O(1).\label{eq:App_Out_3_3}
\end{align} 
\end{itemize}
From the upper bound~\eqref{eq:App_Out_3_3} and the lower bound \eqref{eq:App_Out_2_4}, we can then write
\begin{align}
 &\frac{1}{N_1}\E_{\tilde{\mathbf{H}}_1} \log \det (\Id_{N_1}+\mathbf{H}_1 \mathbf{K} \mathbf{H}_1^\He) - \frac{1}{N_2} \E_{\tilde{\mathbf{H}}_2} \log \det (\Id_{N_2}+\mathbf{H}_2 \mathbf{K} \mathbf{H}_2^\He) \notag\\
&~~~~\leq \frac{1}{N_1} \sum_{i=1}^{N_1}\left(\log (1\!+\!\left[\max(\|\hat{\mathbf{H}}_1\|^2_{\mathrm{F}}\!+\!MN_1\sigma^2),1)\right]\lambda_i)\!-\!\log(1\!+\!\lambda_i)\right) -\frac{N_1\!-\!N_2}{N_1} \log(\sigma^2)\!+\!O(1)\label{eq:App_Out_4_1}\\
&~~~~= -\frac{N_1-N_2}{N_1} \log(\sigma^2)+O(1)\label{eq:App_Out_4_2}
\end{align}
where \eqref{eq:App_Out_4_2} is obtained by observing that the sum of difference of logarithms in \eqref{eq:App_Out_4_1} remains bounded for any values taken by the $\lambda_i$.
\bibliographystyle{IEEEtran}
\bibliography{Literatur}
\end{document}